\newlength\myindent
\DeclareMathOperator*{\argmin}{argmin}
\def\blue#1{{\color{black}#1}}
\newcommand{\infnorm}[1]{\left\lVert#1\right\rVert_\infty}
\newcommand{\lonenorm}[1]{\left\lVert#1\right\rVert_{\mathcal{L}_1}}
\def\laplace#1{\mathfrak{L}[#1]}
\def\lonew{${\mathcal{L}_1}$ }
\def\linf{{\mathcal{L}_\infty}}
\def\laplace#1{\mathfrak{L}[#1]}
\def\laplace#1{\mathfrak{L}[#1]}
\def\tilx{\tilde{x}}
\def\mbR{\mathbb{R}}
\def \mbZ{\mathbb{Z}}
\def\hsigma{\hat{\sigma}}
\def\mcC{\mathcal{C}}
\def\mcD{\mathcal{D}}
\def\mcX{\mathcal{X}}
\def\mcU{\mathcal{U}}
\def\mcX{\mathcal{X}}
\def \hatx{\hat{x}}
\def \xstar {x^\star}
\def \ustar {u^\star}
\def\trieq{\triangleq}
\def\tild{\tilde{d}}
\def\mbZ{\mathbb{Z}}
\def\mcC{{\mathcal{C}}}
\def \bbracket#1{\bm{[}#1\bm{]}}
\def \bardeltade {\bar\delta_\textup{de}}
\def \bardeltatild{\bar \delta_{\tilde d}}
\def\@listiii{\leftmargin\leftmarginiii
              \labelwidth\leftmarginiii
              \advance\labelwidth-\labelsep
              \topsep\z@
              \parsep\z@
              \partopsep\z@
              \itemsep\topsep}
\crefname{equation}{}{} 
\crefname{lemma}{Lemma}{Lemmas}
\crefname{theorem}{Theorem}{Theorems}
\crefname{table}{Table}{Tables}
\crefname{figure}{Fig.}{Figs.}
\crefname{remark}{Remark}{Remarks}
\crefname{assumption}{Assumption}{Assumptions}
\crefname{section}{Section}{Sections}
\crefname{definition}{Definition}{Definitions}
\crefname{algorithm}{Algorithm}{Algorithms}
\crefname{proposition}{Proposition}{Propositions}
\crefname{appendix}{Appendix}{Appendices}
\newtheorem{theorem}{Theorem}
\newtheorem{lemma}{Lemma}
\theoremstyle{definition}  
\theoremstyle{definition} 
\newtheorem{assumption}{Assumption}
\theoremstyle{remark}  
\newtheorem{remark}{Remark}
\title{\LARGE \bf
Guaranteed Trajectory Tracking under Learned
Dynamics with Contraction Metrics and Disturbance Estimation}
\author{Pan Zhao, Ziyao Guo, Yikun Cheng,  Aditya Gahlawat, Hyungsoo Kang and Naira Hovakimyan 
\thanks{This work is supported in part by NASA through the ULI grant \#80NSSC22M0070,
and in part by NSF under the RI grant \#2133656}
\thanks{P. Zhao is with the Department of Department of Aerospace Engineering and Mechanics, University of Alabama, Tuscaloosa, AL 35487, USA. {\tt \small pan.zhao@ua.edu}}
\thanks{Z. Guo, Y. Cheng,  A. Gahlawat, H. Kang and N. Hovakimyan are with the Department of 
Mechanical Science and Engineering, University of
Illinois at Urbana-Champaign, Urbana, IL 61801, USA. {\tt\small ziyaog2, yikun2, gahlawat, hk15, nhovakim@illinois.edu}}%
}
\begin{document}

\maketitle
\thispagestyle{empty}
\pagestyle{empty}

\begin{abstract}
This paper presents a contraction-based learning control architecture that allows for using model learning tools to learn matched model uncertainties while guaranteeing trajectory tracking performance during the learning transients. The architecture relies on a disturbance estimator to estimate the pointwise value of the uncertainty, i.e., the discrepancy between a nominal model and the true dynamics, with pre-computable estimation error bounds, and a robust Riemannian energy condition for computing the control signal. Under certain conditions, the controller guarantees exponential trajectory convergence during the learning transients, while learning can improve robustness and facilitate better trajectory planning. 
Simulation results validate the efficacy of the proposed control architecture. 
\end{abstract}

\begin{IEEEkeywords}
Robust control;  decision-making under uncertainty; machine learning for control; robot safety
\end{IEEEkeywords}

\section{Introduction}\label{sec:introduction}

Robotic and autonomous systems often exhibit nonlinear dynamics and operate in uncertain and disturbed environments. Planning and executing a trajectory is one of the most common ways for an autonomous system to achieve a mission. However, the presence of uncertainties and disturbances, together with the nonlinear dynamics, brings significant challenges to safe planning and execution of a trajectory. Built upon contraction theory and disturbance estimation, this paper presents a trajectory-centric learning control approach that allows for using different model learning tools to learn uncertain dynamics while providing guaranteed tracking performance in the form of exponential trajectory convergence throughout the learning phase. Our approach hinges on control contraction metrics (CCMs) and uncertainty estimation. 

\subsection{Related Work}

\blue{{\it Control design methods for uncertain systems} can be roughly classified into adaptive/robust approaches and learning-based approaches.   Robust approaches such as $H_\infty$ control \cite{Zhou98essentials}, sliding mode control \cite{edwards1998sliding-book}, and robust/tube model predictive control (MPC) \cite{mayne2005robust-tube-mpc,mayne2014mpc-survey}, usually consider parametric uncertainties or bounded disturbances,  and design controllers to achieve certain performance despite the presence of such uncertainties. Disturbance-observer (DOB) based control and related methods such as active disturbance rejection control  (ADRC) \cite{han2009adrc} usually lump parametric uncertainties, unmodeled dynamics, and external disturbances together as a ``total disturbance'', estimate it via an observer such as DOB and extended state observer (ESO) \cite{han2009adrc}, and then compute control actions to compensate for the estimated disturbance \cite{chen2015dobc}. On the other hand, adaptive control methods such as model reference adaptive control (MRAC) \cite{ioannou2012robust} and  \lonew adaptive control \cite{naira2010l1book} 
rely on online adaptation to estimate parametric or non-parametric uncertainties and use of the estimated value in the control design to provide stability and performance guarantees.
While significant progress has been made in the linear setting, {\it trajectory tracking for nonlinear uncertain systems} with {\it transient performance guarantees} has been less successful in terms of analytical quantification, yet it is required for safety guarantees of robotic and autonomous systems. }

{\it Contraction theory}~\cite{lohmiller1998contraction} provides a powerful tool for analyzing general nonlinear systems in a differential framework and is focused on studying the convergence between pairs of state trajectories towards each other, i.e., incremental stability. It has recently been extended for constructive control design, e.g., via control contraction metrics (CCMs) \cite{manchester2017control,tsukamoto2020robust-stochastic}.
Compared to incremental Lyapunov function approaches for studying incremental stability, contraction metrics present an {\it intrinsic} characterization of incremental stability (i.e., invariant under change of coordinates); additionally, the search for a CCM can be achieved using the sum of squares (SOS) optimization or semidefinite programming \cite{singh2019robust}, and DNN optimization \cite{tsukamoto2020neural-contraction,sun2020learning-ccm}. 
For nonlinear uncertain systems, CCM has been integrated with adaptive and robust control to address parametric \cite{lopez2020adaptive-ccm} and non-parametric uncertainties \cite{zhao2022robust-ccm-de,lakshmanan2020safe}. The issue of bounded disturbances in contraction-based control has been tackled through robust analysis \cite{singh2019robust} or synthesis \cite{zhao2022tube-rccm-ral,manchester2018rccm}. For more work related to contraction theory and CCM for nonlinear stability analysis and control synthesis, the readers can refer to a recent tutorial paper \cite{tsukamoto2021contraction-tutorial} and the references therein. 

On the other hand, recent years have witnessed an increased use of {\it machine learning (ML) to learn dynamics models}, which are then incorporated into control-theoretic approaches to generate the control law. 
{For model-based learning control with {\it safety  and/or transient performance guarantees}, most of the existing research relies on {\it} quantifying the learned model error, and {\it robustly} handling such an error in the controller design or analyzing its effect on the control performance
\cite{khojasteh2020probabilistic-abbrev,berkenkamp2015safe,chou2021model-error}. 
As a result, researchers have largely relied on Gaussian process regression (GPR) to learn uncertain dynamics, due to its inherent ability to quantify the learned model error \cite{khojasteh2020probabilistic-abbrev,berkenkamp2015safe}. 
Additionally, in almost all the existing research, the control performance is directly determined by the quality of the learned model, i.e., a poorly learned model naturally leads to poor control performance. 
   Deep neural networks (DNNs) were used to approximate state-dependent uncertainties in adaptive control design in \cite{joshi2019deep-mrac,sun2021lyapunov-dnn}. 
However, these results only provide asymptotic (i.e., no transient) performance guarantees at most, and investigate pure control problems without considering planning. Moreover, they either consider linear nominal systems or leverage feedback linearization to cancel the (estimated) nonlinear dynamics, which can only be done for fully actuated systems. In contrast, this paper considers the planning-control pipeline and does not try to cancel the nonlinearity, thereby allowing the systems to be underactuated. 
In \cite{shi2019neurallander,chou2021model-error}, the authors used DNNs for batch-wise learning of uncertain dynamics from scratch; 
however, good tracking performance cannot be achieved when the learned uncertainty model is poor. 
\begin{figure}
    \centering
    \includegraphics[width=0.9\columnwidth]{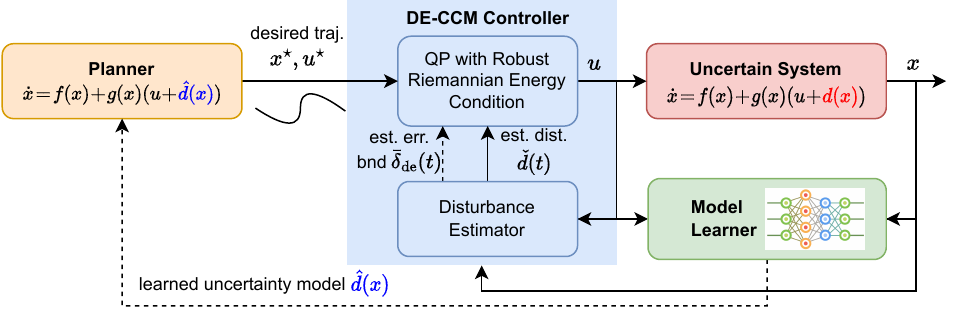}
    \vspace{-2mm}
    \caption{Proposed control architecture incorporating learned dynamics}
    \label{fig:ccm-dnn-framework}
    \vspace{-2mm}
\end{figure}

{\it Statement of Contributions}: We propose a contraction-based learning control architecture for nonlinear systems with matched uncertainties (depicted in \cref{fig:ccm-dnn-framework}). The proposed architecture allows for using different  ML tools, e.g., DNNs, for learning the uncertainties while guaranteeing exponential trajectory convergence under certain conditions throughout the learning phase. It leverages a disturbance estimator with a pre-computable estimation error bound (EEB) and a robust Riemann energy condition to compute the control signal. It is empirically shown that learning can help improve the robustness of the controller and facilitate better trajectory planning. We demonstrate the efficacy of the proposed approach using a planar quadrotor example. 

\blue{This work builds on \cite{zhao2022robust-ccm-de} with several key differences. The authors of \cite{zhao2022robust-ccm-de} introduce a robust tracking controller utilizing CCM and disturbance estimation without involving model learning. In contrast, this work adapts the controller to handle scenarios where machine learning tools are used to learn unknown dynamics, offering tracking performance guarantees throughout the learning phase. Additionally, this research empirically showcases the advantages of integrating learning to improve trajectory planning and strengthen the robustness of the closed-loop system, aspects that are not explored in \cite{zhao2022robust-ccm-de}.}


{\it Notations}. Let $\mathbb{R}^n$ and $\mathbb{R}^{m\times n}$ denote  the $n$-dimensional real vector space and the space of real $m$ by $n$ matrices, respectively.  ${0}$ and $I$ denote a zero matrix and an identity matrix of compatible dimensions, respectively.
$\infnorm{\cdot}$ and $\norm{\cdot}$ denotes the $\infty$-norm and $2$-norm of a vector/matrix, respectively. 
Given a vector $y$,  let $y_i$ denote its $i$th element. For a vector $y\in\mbR^n$ and a matrix-valued function $M: \mbR^n\rightarrow \mbR ^{n\times n}$, let $\partial_y M(x)\trieq \sum_{i=1}^n\frac{\partial M(x)}{\partial x_i}y_i$ denote the directional derivative of $M(x)$ along $y$. 
For symmetric matrices $P$ and $Q$, $P>Q$ ($P\geq Q$) means $P-Q$ is positive definite (semidefinite). $\langle X\rangle$ stands for $X +X^\top$. Finally, we use $\ominus$ to denote the Minkowski set difference.

\section{Preliminaries and Problem Setting}
Consider a nonlinear control-affine system 
\begin{equation}\label{eq:dynamics}
    \dot{x}(t) = f(x(t))+B(x(t))(u(t) + d(x(t))),\quad x(0)=x_0,
\end{equation}
where $x(t)\in \mcX \subset \mbR^n$ and $u(t) \in \mcU \subset \mbR^m$ are state and input vectors, respectively, $f:\mbR^n\rightarrow \mbR^n$ and $B:\mbR^n\rightarrow \mbR^m$ are known  functions that are locally Lipschitz, $d:\mbR^n\rightarrow \mbR^m$ is an unknown function denoting the {\it matched} model uncertainties. $B(x)$ is assumed to have full column rank for all $x\in \mcX$.
Additionally, $\mcX$ is a compact set that includes the origin, and  $\mcU$ is the control constraint set  defined as $\mcU\trieq \{u\in \mbR^m: \underline{u}\leq u \leq \bar{u}_i, \ i=1,\dots,m\}$, where $\underline{u}_i$ and $\bar{u}_i$ represent the lower and upper bounds of the $i$th control channel, respectively. 
\begin{remark}
The matched uncertainty assumption is common in adaptive control \cite{ioannou2012robust} or disturbance observer-based control \cite{chen2015dobc}, and is made in existing related works 
\cite{lakshmanan2020safe,zhao2022robust-ccm-de,lopez2020adaptive-ccm}.
\end{remark}
\begin{assumption}\label{assump:lipschitz-bound-d-B}
There exist known positive constants $L_B$, $L_d$ and $b_d $ such that for any $x,y \in \mcX$, the following holds:
\begin{IEEEeqnarray*}{rl}
\left\| {B(x) - B(y)} \right\| &\le {L_B}\left\| {x - y} \right\|,\\
\left\| {d(x) - d(y)} \right\| &\le {L_d}\left\| {x - y} \right\|,\  \left\| {d(x)} \right\|  \le {b_d}. \label{eq:lipschitz-cond-d-B}
\end{IEEEeqnarray*}
\end{assumption}
\begin{remark} Assumption~\ref{assump:lipschitz-bound-d-B}  {\it does not assume} that the system states stay in $\mcX$ (and thus are bounded). We will prove the boundedness of $x$ later in \cref{them:DE-CCM}.
 Assumption~\ref{assump:lipschitz-bound-d-B} merely indicates that  $d(x)$ is {\it locally Lipschitz} with a known {\it bound}  on the Lipschitz constant and is bounded by a prior known constant {\it in the compact set $\mcX$}. 
\end{remark}
 Assumption~\ref{assump:lipschitz-bound-d-B} is not very restrictive as the local Lipschitz bound in $\mcX$ for $d(x)$  can be conservatively estimated from prior knowledge. Additionally, given the local Lipschitz constant bound $L_d$ and the compact set $\mcX$, we can always derive a uniform bound using Lipschitz property if a bound for $d(x)$ for any $x$ in $\mcX$ is known. For example, supposing $\norm{d(0)}\leq b_d^0$,  we have $\norm{d(x)}\leq b_d^0 + L_d\max_{x\in\mcX} \norm{x}$. In practice, leveraging prior knowledge about the system can result in a tighter bound than the one based on Lipschitz continuity. Thus, we assume a uniform bound.
Under Assumption~\ref{assump:lipschitz-bound-d-B}, it will be shown later (in \cref{sec:sub-disturbance-estimation}) that the pointwise value of $d(x(t))$ at any time $t$ can be estimated with pre-computable estimation error bounds (EEBs). 

\subsection{Learning Uncertain Dynamics}
Given a collection of data points $\{(x_i,d_i)\}_{i=1}^N$ with $N$ denoting the number of data points, the uncertain function $d(x)$ can be learned using ML tools. As a demonstration purpose, we choose to use DNNs, due to their significant potential in dynamics learning attributed to their expressive power and the fact that they have been rarely explored for dynamics learning with safety and/or performance guarantees. 
Denoting the learned function as $\hat d(x)$ and the model error as $\tilde d(x) \trieq d(x) - \hat d(x)$, the actual dynamics \eqref{eq:dynamics} can be rewritten as 
\begin{equation}\label{eq:dynamics-w-dnn}
  \hspace{-2mm}      \dot{x}  \!=\! f(x)\!+\!B(x)(u \!+\!\hat d(x) \!+ \!\tilde d(x)) \!=\!    F_l(x,u) \!+\! B(x)\tilde d(x),
\end{equation}
where 
\begin{equation}\label{eq:Fl-defn}
    F_l(x,u) \trieq \overbrace{f(x)+B(x)\hat d(x)}^{ {\textstyle \trieq \hat f(x)}} + B(x)u.
\end{equation}
The learned dynamics can now be represented as 
\begin{equation}\label{eq:dynamics-learned}
{\dot x}^\star = F_l(\xstar,\ustar).
\end{equation}
\begin{remark}
The above setting includes the special case of {\em no learning}, corresponding to $\hat d(x) \trieq 0$. 
\end{remark}
  Note that the performance guarantees provided by the proposed framework are {\it agnostic to the model learning tools used}, as long as the following assumption can be satisfied. 
\begin{assumption}\label{assump:uniform-err-bnd}
We are able to obtain a uniform error bound for the learned function $\hat d(x)$, i.e., we could compute a constant $\bardeltatild $ such that 
\begin{equation}\label{eq:uniform-err-bnd}
   \max _{x\in\mcX} \norm{\tild (x)}\leq \bardeltatild.  
\end{equation}
\end{assumption}
\begin{remark}\label{remark:uniform-err-bnd}
    Assumption~\ref{assump:uniform-err-bnd} can be easily satisfied when using a broad class of ML tools. For instance, when Gaussian processes are used,  a uniform error bound (UEB) can be computed using the approach in \cite{lederer2019uniform}. When using DNNs, we could use spectral-normalized DNNs (SN-DNNs) \cite{miyato2018spectral-normalization-abbrev} (to enfoce that $\hat d(x)$ has a Lispitchz bound $L_d$ in $\mcX$) and compute the UEB as follows. 
    Obviously, since both $d(x)$ and $\hat d(x)$  have a local Lipschitz bound $L_d$ in $\mcX$, the model error $\tilde d(x)$ has a local Lipschitz bound $2L_d$ in $\mcX$. As a result, given any point $x^\ast\in\mcX$, we have 
$$
    \|\tilde d(x^\ast)\| \leq \min\limits_{i\in \{1,\dots,N\}} \left( \|\tilde d(x_i)\|+ 2L_d \norm{x^\ast-x_i}\right), 
$$ where $x_i$ is one of the $N$ number of data points. The preceding inequality implies \cref{eq:uniform-err-bnd} holds with $\bar \delta_{\tild} = \max_{x^\ast \in\mcX}\min_{i\in \{1,\dots,N\}} \left( \|{\tilde d(x_i)}\|+ 2L_d \norm{x^\ast-x_i}\right)$.
\end{remark}
\subsection{Problem Setting}
The learned dynamics \cref{eq:dynamics-learned} (including the special case of $\hat d= 0$) can be incorporated in a motion planner or trajectory optimizer to plan a desired trajectory $(\xstar(\cdot),\ustar(\cdot))$ to minimize a specific cost function. Suppose Assumptions~\ref{assump:lipschitz-bound-d-B}, \ref{assump:uniform-err-bnd}, and \ref{assump:ccm-exists-for-nominal-sys} hold. The focus of the paper includes 
\begin{enumerate}[label =(\roman*)]
    \item designing a feedback controller to track the desired state trajectory $\xstar(\cdot)$ with {\it guaranteed tracking performance} despite the presence of the model error $\tilde d(x)$, and
    \item empirically demonstrating the benefits of learning in improving the robustness and reducing the cost associated with the actual trajectory. 
\end{enumerate}

In the following, we will present some preliminaries on CCM and uncertainty estimation used to build our solution. 

\subsection{CCM for the Nominal Dynamics}\label{sec:sub-ccm-review}
CCM extends contraction analysis \cite{lohmiller1998contraction} to controlled dynamic systems, where the analysis simultaneously seeks a controller and a metric that characterizes the contraction properties of the closed-loop system  \cite{manchester2017control}. According to \cite{manchester2017control}, a symmetric matrix-valued function $M(x)$ serves as a strong CCM for the {nominal} ({\it uncertainty-free}) system
\begin{equation}\label{eq:dynamics-nominal}
    \dot x = f(x) + B(x) u,
\end{equation}
in $\mcX$, if there exist positive constants $\alpha_1$, $\alpha_2$ and $\lambda$ such that
\begin{subequations}\label{eq:ccm-condition-strong}
\begin{gather}
 \alpha_1 I \!\leq\!  M(x)\!\leq\! \alpha_2 I \label{eq:ccm-uniform-bnd}\\
    \delta_x ^\top MB=0 \! \Rightarrow \!  \delta_x^\top\!\left(\!\left \langle \!M{\textstyle\frac{\partial f}{\partial x}}\!\right\rangle \!+ \!\partial_f M\!+\!2\lambda M\!\right) \!\delta_x \!\leq\! 0  \label{eq:ccm-condition-strong-1}\\
\!\!\left\langle \!M{\textstyle\frac{\partial b_i}{\partial x}}\!\right\rangle \!+\! \partial_{b_i}M\! =\! 0, \ \!  i=1,\cdots,m, \label{eq:ccm-condition-strong-2}
\end{gather}
\end{subequations}
hold for all $\delta_x \neq 0$ and $x\in\mcX$. 

\begin{assumption}\label{assump:ccm-exists-for-nominal-sys}
    There exists a strong CCM $M(x)$ for the nominal system \cref{eq:dynamics-nominal} in $\mcX$, i.e., $M(x)$ satisfies \cref{eq:ccm-uniform-bnd,eq:ccm-condition-strong-1,eq:ccm-condition-strong-2}.    
\end{assumption}
\begin{remark}
Similar to the synthesis of Lyapunov functions, given dynamics, a strong CCM can be systematically synthesized using convex optimization, more specially, sum of squares programming \cite{manchester2017control,singh2019robust,zhao2022tube-rccm-ral}.  
\end{remark}
Given a CCM $M(x)$, a feasible trajectory $(\xstar(\cdot), \ustar(\cdot))$ satisfying the nominal dynamics \cref{eq:dynamics-nominal}, and the actual state $x(t)$ at $t$, 
the control signal can be constructed as follows \cite{manchester2017control, singh2019robust}. At any $t>0$, compute a minimal-energy path (termed as {\it geodesic}) $\gamma(\cdot,t)$ connecting  $x(t)$ and $\xstar(t)$, e.g., using the pseudospectral method \cite{leung2017pseudospectral-geodesic}. Note that the geodesic is always a straight line segment if the metric is constant.  Next, compute the Riemann energy of the geodesic, defined as $E(\xstar(t),x(t))= \int_0^1 \gamma_s(s,t)^\top M(\gamma(s,t)))\gamma_s(s,t)ds$, where $\gamma_s(s)\trieq \frac{\partial \gamma}{\partial s}$. Finally, by interpreting the Riemann energy as an incremental control Lyapunov function, we can construct a control signal $u(t)$ such that 
\begin{align}
    \dot E(\xstar(t),x(t), u(t)) & \leq -2 \lambda E(\xstar(t),x(t)),\label{eq:Edot-E-ineq} 
\end{align} 
where $\dot E(\xstar,x,u)  = 2\gamma_s^\top(1,t)M(x) \dot x - 2\gamma_s^\top(0,t)M(\xstar)\dot x^\star$ with the dependence on $t$ omitted for brevity, 
$\dot x(t)$ is defined in \cref{eq:dynamics-nominal} and  $\dot x^\star (t) = f(\xstar(t))+B(\xstar(t))\ustar(t)$. 
 In practice, one may want to compute $u(t)$ with a minimal $ \norm{u(t)-\ustar(t)}$ such that \cref{eq:Edot-E-ineq} holds, which can be achieved by setting $u(t) =\ustar(t) + k(t,\xstar,x) $ with $k(t,\xstar,x) $  obtained via solving a quadratic programming (QP) problem \cite{manchester2017control,singh2019robust}:
\begin{align}
  k(t,\xstar,x) =\argmin_{k\in\mbR^m}  & \norm{k-\ustar(t)}^2 \label{eq:qp-nominal}  \\
 \hspace{-5mm} \textup{s.t. }   
  \dot E(\xstar(t),x(t), \ustar(t)+k) & \leq -2 \lambda E(\xstar(t),x(t)) 
  \label{eq:Edot-E-ineq-detailed}
\end{align}
at each time $t$. 
The problem \cref{eq:qp-nominal} is commonly referred to as the pointwise minimum-norm control problem and possesses an analytic solution  \cite{freeman2008robust-clf-book}.
The performance guarantees provided by the CCM-based controller can be summarized in the following theorem.  
\begin{lemma} \label{them:synthesis-nominal}
\cite{manchester2017control} Suppose Assumption~\ref{assump:ccm-exists-for-nominal-sys} holds for the nominal system \cref{eq:dynamics-nominal} with positive constants $\alpha_1$, $\alpha_2$ and $\lambda$.  Then, a control law satisfying \cref{eq:Edot-E-ineq} universally exponentially stabilizes the system \cref{eq:dynamics-nominal}, which can be expressed mathematically as
\begin{equation}\label{eq:ues-math}
    \norm{x(t)-\xstar(t)} \leq \sqrt{\frac{\alpha_2}{\alpha_1}} \norm{x(0)-\xstar(0)} e^{-\lambda t}.
\end{equation}
\end{lemma}

The following lemma from \cite{singh2019robust} establishes a bound on the tracking control effort from solving \cref{eq:qp-nominal}. 
{\begin{lemma}\label{lemma:u-bnd-nominal}
\cite[Theorem 5.2]{singh2019robust} 
For all $\xstar,~x\in \mcX$ such that $\mathtt d(\xstar,x)\leq \bar {\mathtt d}$ for a scalar constant $\bar {\mathtt d}\geq 0$, the control effort from solving \cref{eq:qp-nominal} is bounded by 
\begin{equation}\label{eq:k-bound-nominal-w-bardeltau-defn}
    \norm{k^\ast(t,\xstar,x)}\leq \bar \delta_u \trieq \frac{\bar {\mathtt d}}{2}\sup_{x\in\mcX}\frac{\bar \lambda(L^{T}GL)}{\underline \sigma_{> 0}(B^T L^{-1})},
\end{equation}
where $G(x) \trieq\left \langle \!M\frac{\partial f}{\partial x}\!\right\rangle + \partial_f M +2\lambda M\!$, $L(x)$ satisfies $M(x) = L(x)L^T(x)$, $\bar\lambda(\cdot)$ represents the largest eigenvalue, and $\underline \sigma_{> 0}(\cdot)$ denotes the smallest non-zero singular value. 
\end{lemma}}

\subsection{Uncertainty Estimation with Computable Error Bounds}\label{sec:sub-disturbance-estimation}
We leverage a disturbance estimator described in \cite{zhao2022robust-ccm-de} 
to estimate the value of the uncertainty $d(x(t))$ at each time instant. More importantly, an estimation error bound can be pre-computed and systematically reduced by tuning a parameter of the estimator. 
The estimator comprises a state predictor and an update law. The state predictor is defined by
{
\begin{equation}\label{eq:state-predictor}
    \dot{\hatx}(t)  =  f(x(t)) +B(x(t)) u (t)+ \hsigma(t) -a\tilx(t),\ \hatx(0)= x_0 ,
\end{equation}}where $\tilx(t) \triangleq \hatx(t) - x(t)$ denotes the prediction error,  $a>0$ is a  scalar constant. The estimate, $\hsigma(t)$, is given by a piecewise-constant update law: 
\begin{equation}\label{eq:estimation_law}
\left\{
\begin{split}
   \hsigma(t)
    &= 
    \hsigma(iT),  \quad t\in [iT, (i+1)T),\\
    \hsigma(iT)
    &= - {{a}/{(e^{aT}-1})}\tilde{x}(iT),  
\end{split}\right.
\end{equation}
where $T$ is an estimation sampling time, and $i=0,1,2,\cdots$. Finally, the  value of $d(x)$ at time $t$ is estimated as
\begin{equation}\label{eq:dcheck-hsigma-relation}
   \check {d}^0(t) = B^\dagger(x(t))\hsigma(t),
\end{equation}
where $B^\dagger(x(t))$ is the pseudoinverse of $B(x(t))$. 
The following lemma
establishes the estimation error bound associated with the disturbance estimator defined by \eqref{eq:state-predictor} and \eqref{eq:estimation_law}. \cref{lemma:estiamte-error-bound} can be considered as a simplified version of \cite[Lemma~4]{zhao2022robust-ccm-de} in the sense that the uncertainty considered in \cite{zhao2022robust-ccm-de} can explicitly depend on both time and states, i.e., represented as $d(t,x)$, while the uncertainty in this paper depends on states only.  
The proof is similar to that in \cite{zhao2022robust-ccm-de} and is omitted for brevity. 
\begin{lemma} \cite{zhao2022robust-ccm-de}  \label{lemma:estiamte-error-bound}
Given the dynamics \eqref{eq:dynamics} subject to Assumption~\ref{assump:lipschitz-bound-d-B}, and the disturbance estimator in \eqref{eq:state-predictor} and \eqref{eq:estimation_law}, for $\xi \geq T$, if
\begin{equation}\label{eq:xu-in-XU-0-tau}
x(t)\in\mcX,\ \forall t \in[0,\xi],\  u(t)\in \mcU,\ \forall t \in [0,\xi),
\end{equation}
the estimation error can be bounded as 
{
\begin{align}
    \hspace{-3mm}  \norm{\check d^0\!(t)\!-\!d(x(t))} & \!\leq\!
    \bardeltade^0\!(t) \!\trieq \!\! \left\{
    \begin{array}{ll}
  \!\!\! b_d, \ \forall 0\leq t<T, \\
  \!\!\! {\textstyle \alpha(T)  \max\limits_{x\in \mcX} \norm{B^\dagger(x)}}, \ \! \forall t \!\geq\! T\!,\hspace{-5mm}
    \end{array}
    \right.\label{eq:estimation_error_bound}
\end{align}}for all $t$ in $[0,\xi]$, where 
\begin{align}
    \alpha(T) & \triangleq   2 \sqrt{n} \phi T( {L_d}{\max\limits _{x \in \mathcal{X}}}\left\| {B(x)} \right\| \!+\! {L_B}b_d) +\!(1\!-\!e^{-aT})\sqrt n b_d \mathop{\max}\limits_{x\in \mcX}\norm{B(x)} \\
    \phi & \trieq  \max\limits_{x\in \mcX, u\in \mcU}\norm{f(x) + B(x)u}+b_d\max\limits_{x\in \mcX}\norm{B(x)},  \label{eq:phi-defn}
\end{align}
with constants $L_B$, $L_d$ and $b_d$~from   Assumption~\ref{assump:lipschitz-bound-d-B}.
Moreover, 
$
    \lim_{T\rightarrow 0}  \bardeltade^0(t)  = 0,
$ for any $t\geq T$.
\end{lemma}

\begin{remark}
According to Lemma~\ref{lemma:estiamte-error-bound}, the estimation error after a single sampling interval can be arbitrarily reduced by decreasing $T$. 
\end{remark}

\begin{remark}\label{remark:est_err_bnd_for_practical_implementation}
As explained in \cite[Remark 5]{zhao2022robust-ccm-de}, the error bound $\bar \delta_\textrm{de}^0(t)$ can be quite conservative primarily due to the conservatives introduced in computing $\phi$ and $\alpha(T)$. For {\it practical implementation}, one could benefit from empirical studies, such as conducting simulations with a selection of user-defined functions of $d(x)$ to identify a more refined bound than $\bar \delta_\textrm{de}^0(t)$ defined in \cref{eq:estimation_error_bound}. 
\end{remark}
\section{Disturbance Estimation-Based Contraction Control Under Learned Dynamics}\label{sec:robust-CCM}
This section introduces an approach based on CCM and disturbance estimation to ensure the UES of the uncertain system \cref{eq:dynamics-w-dnn} even when the learned model $\hat d(x)$ is poor. 

\subsection{CCMs and Feasible Trajectories for the True System}

The first step is to search for a valid CCM for the uncertain system \cref{eq:dynamics-w-dnn}. Due to the special structure of \cref{eq:dynamics-w-dnn} attributed to the matched uncertainty assumption, we have the following lemma.   The proof is straightforward by following \cite{lopez2020adaptive-ccm} that considers matched parametric uncertainties and is thus omitted. 
\begin{lemma}\label{lemma:ccm_condition_uncertain_vs_nominal}
If a contraction metric $M(x)$ satisfies the strong 
CCM condition \cref{eq:ccm-condition-strong} for the nominal system, then the same metric satisfies the strong CCM condition
for the learned dynamics \cref{eq:dynamics-learned}, as well as for the uncertain system \cref{eq:dynamics-w-dnn} with the learned dynamics $\hat d(x)$.
\end{lemma}

 Define 
 \begin{equation}\label{eq:mcD-defn}
     \mcD \trieq \{y\in \mbR^m: \norm{ y}\leq b_d\}.
 \end{equation}
 Equation \cref{eq:mcD-defn} and Assumption~\ref{assump:lipschitz-bound-d-B} imply that $d(x)\in \mcD$ for any $x\in\mcX$.  As discussed in \cref{sec:sub-ccm-review},  when provided with a CCM and a feasible desired trajectory $\xstar(t)$ and $\ustar(t)$, 
a controller can be designed to exponentially stabilize the actual state trajectory $x(t)$ to the desired trajectory $\xstar(t)$. In practice, we only have access to learned dynamics to plan a trajectory. We now present a lemma providing the condition under which $\xstar(t)$ planned using the learned dynamics \cref{eq:dynamics-learned} also represents a feasible state trajectory for the true system. 
\begin{lemma}\label{lemma:feasible-traj-true-sys}
Considering a trajectory $(\xstar(\cdot), \ustar(\cdot))$ 
satisfying the learned dynamics \cref{eq:dynamics-learned} with $\xstar(t)\in \mcX$ for all $t\geq 0$, if 
\begin{equation}\label{eq:u-traj-condition-for-feasibility}
    \ustar(t)+\hat d(\xstar(t)) \in \mcU\ominus \mcD \quad  \forall t\geq 0,
\end{equation}
and Assumption~\ref{assump:lipschitz-bound-d-B} hold, then, 
 $\xstar(\cdot)$ is a feasible state trajectory for the true system \cref{eq:dynamics}.   
\end{lemma}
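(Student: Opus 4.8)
The plan is to establish the lemma by a direct verification: I will exhibit an admissible input under which $\xstar$ is an exact solution of the true dynamics \cref{eq:dynamics}. Concretely, ``$\xstar$ is a feasible state trajectory for the true system'' means there is a control signal $\tilde u(t)\in\mcU$ such that the pair $(\xstar(t),\tilde u(t))$ satisfies \cref{eq:dynamics} for all $t\ge 0$; the state-constraint requirement $\xstar(t)\in\mcX$ is part of the hypothesis. Since $B(x)$ has full column rank, insisting that $\xstar$ obey \cref{eq:dynamics} pins the input down uniquely, which tells us exactly which $\tilde u$ to try.

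The candidate I would use is
\[
  \tilde u(t) \trieq \ustar(t) + \hat d(\xstar(t)) - d(\xstar(t)).
\]
Substituting $x=\xstar(t)$ and $u=\tilde u(t)$ into the right-hand side of \cref{eq:dynamics}, the two occurrences of $B(\xstar(t))d(\xstar(t))$ cancel, leaving $f(\xstar(t))+B(\xstar(t))\bigl(\ustar(t)+\hat d(\xstar(t))\bigr)$, which by \cref{eq:Fl-defn} equals $F_l(\xstar(t),\ustar(t))$ and hence, by \cref{eq:dynamics-learned}, equals $\dot x^\star(t)$. So $(\xstar,\tilde u)$ satisfies \cref{eq:dynamics}. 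It then remains only to check $\tilde u(t)\in\mcU$ for all $t$: since $\xstar(t)\in\mcX$, \cref{assump:lipschitz-bound-d-B} gives $\norm{d(\xstar(t))}\le b_d$, i.e.\ $d(\xstar(t))\in\mcD$; writing $\tilde u(t)=\bigl(\ustar(t)+\hat d(\xstar(t))\bigr)-d(\xstar(t))$ and invoking hypothesis \cref{eq:u-traj-condition-for-feasibility}, the first term lies in $\mcU\ominus\mcD$, so by the defining property of the Minkowski/Pontryagin set difference (using that $\mcD$ is a ball centered at the origin, hence symmetric) subtracting the element $d(\xstar(t))\in\mcD$ keeps the result in $\mcU$.

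I do not expect any genuine obstacle here: the lemma is essentially immediate once one reads off the correct $\tilde u$ from the cancellation structure of $F_l$. The only step that needs a little care is the bookkeeping with $\ominus$ — in particular, noting that $\mcD$ is origin-symmetric so that subtracting (rather than adding) an element of $\mcD$ still lands in $\mcU$. As a minor regularity remark, $\tilde u$ inherits continuity from $\ustar$ and the local Lipschitz continuity of $d$ and $\hat d$, so the realizing input is as well-behaved as one needs for the trajectory to be meaningfully ``feasible.''
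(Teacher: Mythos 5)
Your proof is correct and follows essentially the same route as the paper: you construct the identical realizing input $\ustar(t)+\hat d(\xstar(t))-d(\xstar(t))$ (the paper writes it as $\ustar(t)-\tilde d(\xstar(t))$), verify it reproduces $F_l(\xstar,\ustar)$ under the true dynamics, and check admissibility via \cref{eq:u-traj-condition-for-feasibility}. The only cosmetic difference is that the paper notes $-d(\xstar(t))\in\mcD$ directly, whereas you invoke the origin-symmetry of $\mcD$ to subtract $d(\xstar(t))$ — these are the same observation.
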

\begin{proof}
Note that $\bar u^\star(t) \trieq \ustar(t)-\tilde d(\xstar(t)) = \ustar(t)+\hat d(\xstar(t)) - (\hat d(\xstar(t))+\tilde d(\xstar(t))) = \ustar(t)+\hat d(\xstar(t)) - d(\xstar(t))$. Since  $\ustar(t)+\hat d(\xstar(t)) \in \mcU\ominus \mcD$ and $-d(\xstar(t))\in \mcD$, which is due to $\xstar(t)\in \mcX$ and Assumption~\ref{assump:lipschitz-bound-d-B}, we have $\bar u^\star(t)\in \mcU$. Comparison of \cref{eq:dynamics-learned} and \cref{eq:dynamics-w-dnn} implies that $\xstar(t)$ and $\bar u^\star(t)$ satisfies the true dynamics \cref{eq:dynamics-w-dnn}.  Thus, $\xstar(\cdot)$
is a feasible state trajectory for \cref{eq:dynamics-w-dnn}. \hfill $\blacksquare$ 
\end{proof}


\subsection{Filtered Disturbance Estimate}\label{sec:filter-dist}

Using a small $T$ in the estimation law \cref{eq:estimation_law} may introduce high-frequency components into the control signal, which could potentially compromise the robustness of the closed-loop system, e.g., against input delay. \blue{This has been demonstrated in the adaptive control literature, \cite[Sections 1.3 and 2.1.4]{naira2010l1book}, which shows that a high adaptation rate (corresponding to a  small $T$ here) will lead to a reduced time-delay margin}. 
To prevent the high-frequency signal in the estimation loop induced by small $T$ from entering the control loop, we can use a low-pass filter to smooth the estimated uncertainty before using it to compute control signals, \blue{as inspired by the \lonew adaptive control theory \cite{naira2010l1book}}. More specifically, we define the filtered disturbance estimate $\check d(t)$ as 
\begin{equation}\label{eq:check-d-filter-defn}
   {\check d(t)} = {\hat d(x(t))} +  \mathfrak{L}^{-1}\left[\mcC(s)\laplace{\check d^0(t) - \hat d(x(t))}\right], 
\end{equation}
where $\laplace{\cdot}$ and $\mathfrak{L}^{-1}[\cdot]$  denote the Laplace transform and inverse Laplace transform operators, respectively, and $\mcC(s)$ is a $m\times m$ strictly-proper transfer function matrix denoting a stable low pass filter. 
Notice that $\check d^0 (t)- \hat d(x(t))$ is an estimate of the learned model error $\tilde d(x(t)) = d (t)- \hat d(x(t))$. 
Filtering  $\hat d(x(t))$ is not necessary because it will not induce high-frequency signals into the control loop, unlike $\check d^0(x(t))$. 

For simplicity, we select    $\mcC(s)$ to be  
\begin{equation}\label{eq:filter-defn}
\begin{split}
\mcC(s) & = \textup{diag}({k_f^1}/{(s+ k_f^1)}, \dots, {k_f^m}/{(s+ k_f^m)}), 
\end{split}
\end{equation}
where  $k^j_f$ ($j=1,\dots,m$) is the filter bandwidth for the $j$th uncertainty channel. $\mcC(s)$ in \cref{eq:filter-defn} can be described by a state-space model $(A_f,B_f,I)$, where $A_f = \textup{diag}(-k_f^1,\dots,-k_f^m)$, and $B_f$ is a $m\!\times\! m$ matrix with all elements equal to $0$ except the $(j,j)$ element equal to $k_f^j$.  Define 
\begin{equation}
    {\underline{k}_f} = \min \{ k_f^1, \ldots k_f^m\}, \quad {{\bar k}_f} = \max \{ k_f^1, \ldots ,k_f^m\}.
\end{equation}
Leveraging the bound in \cref{eq:estimation_error_bound} and solution of state-space systems, we can straightforwardly derive an EEB on $\check d(t) - d(x(t))$, formally stated in the following lemma.

{\begin{lemma} Consider the filtered disturbance estimate given in \cref{eq:check-d-filter-defn}. If Assumption~\ref{assump:uniform-err-bnd} holds and condition \cref{eq:estimation_error_bound} holds for all $t$ in $[0,\xi]$, then,
\begin{align}
    \hspace{-6mm}  \norm{\check d(t)\!-\!d(x(t))} & \!\leq\!
    \bardeltade(t) \!\trieq\!\psi_1(t) \!+\! \sqrt{m}\bar\delta_{\tilde d}, \hspace{-3mm}\label{eq:estimation_error_bound_filtered}
\end{align} 
for all $t$ in $[0,\xi]$, where 
\begin{align}
      \psi_1(t) \!\trieq \! \left\{
    \begin{array}{ll}
  \!\!\! \psi_2 \trieq {\frac{{{b_d}{{\bar k}_f}}}{{{\underline{k}_f}}}( {1 - {e^{ - {\underline{k}_f}T}}})},  \ &\forall 0\leq t<T, \\
  \!\!\!  \psi_2 + \frac{{\bardeltade^0(T){{\bar k}_f}}}{{{\underline{k}_f}}}, \  &\forall t \!\geq\! T.\hspace{-5mm}
    \end{array}
    \right. \hspace{-3mm}\label{eq:psi1-defn}
\end{align}
\end{lemma}
\begin{proof}
Equation \cref{eq:check-d-filter-defn} implies 
\begin{align}
    \check d(t) - d(x(t)) = \Delta_1(t) + \Delta_2(t), \label{eq:df-d-Delta12-relation}
\end{align}
where \begin{align}
    \laplace{\Delta_1(t)} & =\mcC(s)\laplace{\check d^0(t)- d(x(t))}, \label{eq:Delta1-defn} \\
    \laplace{\Delta_2(t)} & = (I-\mcC(s))\laplace{\hat d(x(t))-d(x(t))}.\label{eq:Delta2-defn} 
\end{align}
Notice that \cref{eq:Delta1-defn} can be represented with a state-space model:
\begin{subequations}\label{eq:filter-dynamics}\vspace{-4mm}
\begin{align}
    \dot x_f(t) & = A_f x_f(t) + B_f (\check d^0(t)-d(x(t)) \label{eq:filter-dynamics-a}\\
     \Delta_1(t) & = x_f(t),\ x_f(0) = 0,\label{eq:filter-dynamics-b}
\end{align}
where $x_f(t)\in\mbR^m$ is the state vector of the filter.
\end{subequations}
From \cref{eq:filter-dynamics,eq:estimation_error_bound}, we have for any $t$ in $[0,T]$,
\begin{equation*}
\begin{split}
 \norm{ \Delta_1(t)}  & 
 \le  \!{b_d} \!\int_0^t  \! {\left\| {{e^{{A_f}(t - \tau )}}{B_f}} \right\|d\tau }  \le {b_d} \!\int_0^t  \! {{{\bar k}_f}{e^{ - {\underline{k}_f}(t - \tau )}}d\tau }  \\  &  =  \!{b_d}\frac{{{{\bar k}_f}}}{{{\underline{k}_f}}}\left( {1 - {e^{ - {\underline{k}_f}t}}} \right) 
 \le {b_d}\frac{{{{\bar k}_f}}}{{{\underline{k}_f}}}\left( {1 - {e^{ - {\underline{k}_f}T}}} \right),
\end{split}
\end{equation*}
where the second inequality is due to the fact that $e^{{A_f}(t - \tau )}{B_f}=\textup{diag}(k_f^1e^{-k_f^1},\dots,k_f^me^{-k_f^m})$. 
Additionally, for any $t$ in $[T,\xi]$, since $  {{\Delta_1}(t)}  \!=\!   {e^{{A_f}(t - T)}}x(T)\! + \!\int_T^t {{e^{{A_f}(t - \tau )}}{B_f}\left[{\check d^0(\tau ) \!-\! d(x(\tau ))} \right]d\tau }  $, we have
\begin{equation*}
\begin{split}
  \left\| {{\Delta_1}(t)} \right\| & \!\leq \!\left\| {x(T)} \right\| \!+\! \bardeltade^0(T)\!\int_T^t \!{\left\| {{e^{{A_f}(t - \tau )}}{B_f}} \right\|d\tau }  \\ 
    &\!\leq \!\psi_2 \!+\! \bardeltade^0(T)\underbrace {\int_T^t {{{\bar k}_f}{e^{ - {\underline{k}_f}(t - \tau )}}d\tau } }_{ = \frac{{{{\bar k}_f}}}{{{\underline{k}_f}}}\left( {1 - {e^{ - {\underline{k}_f}(t - T)}}} \right)} \leq \psi_2 \!+\! \frac{\bardeltade^0(T){{\bar k}_f}}{\underline{k}_f}. 
\end{split}
\vspace{-12mm}
\end{equation*}
As a result, we have 
\begin{equation}\label{eq:Delta1-bnd}
\norm{\Delta_1(t)}\leq \psi_1(t),\ \forall t\in[0,\xi].
\end{equation}
On the other hand, since $\infnorm{\hat d(x(t))\!-\!d(x(t))} \!\leq\! \norm{\hat d(x(t))\!-\!d(x(t))} \!=\! \norm{\tilde d (x(t))}\leq \bardeltatild$ for all $t$ in $[0,\xi]$ by assumption, by applying \cite[Lemma A.7.1]{naira2010l1book} to \cref{eq:Delta2-defn}, we have 
$\infnorm{\Delta_2(t)}\!\leq\! \lonenorm{I\!-\!\mcC(s)}\bardeltatild$ for all $t$ in $[0,\xi]$, where $\lonenorm{\cdot}$ denotes the \lonew norm (also known as the induced $\linf$ gain) of a system. Further considering  $\norm{\Delta_2(t)}\!\leq\! \sqrt{m}\infnorm{\Delta_2(t)}$, $\lonenorm{I-\mcC(s)} = 1$ (due to the specific $\mcC(s)$ selected in \cref{eq:filter-defn}),
\cref{eq:df-d-Delta12-relation} and \cref{eq:Delta1-bnd}, we obtain \cref{eq:estimation_error_bound_filtered}. \hfill $\blacksquare$ 
\end{proof}

The \lonew norm of a linear time-invariant system can be easily computed using the impulse response \cite[Section~A.7.1]{naira2010l1book}.
\begin{remark}
From the definitions of $\bardeltade^0(t)$ in \cref{eq:estimation_error_bound} and of $\psi_1(t)$ in \cref{eq:psi1-defn}, one can see that $\lim_{T\rightarrow0} \psi_1(t) = 0$, for all $t\geq T$. On the other hand, 
$\bardeltatild$ is expected to decrease with the improved accuracy of the learned model $\hat d(x)$. 
\end{remark}
\begin{remark}
As explained in \cref{remark:est_err_bnd_for_practical_implementation}, the bound $\bardeltade^0(t)$ could be quite conservative, which leads to a conservative bound, i.e., $\psi_1(t)$, for $\Delta_1(t)$ (defined in \cref{eq:Delta1-defn}). Additionally, the bound on $\Delta_2(t)$ can also be quite conservative due to the use of the \lonew norm of a system and the $\linf$ norms of inputs and outputs (e.g., characterized by \cite[Lemma~A.7.1]{naira2010l1book}).
As a result, the bound $\bardeltade(t)$ is most likely rather conservative. For practical implementation, one could leverage some empirical study, e.g., doing simulations
under a few user-selected functions of $d(x)$ and $\hat d(x)$ and determining
a tighter bound based on the simulation results. 
\end{remark}}
\begin{remark}
    \blue{For many control techniques, including a low-pass filter will induce phase delay and may damage system performance and/or robustness. However, this is not true for adaptive or uncertainty estimation-based control. As demonstrated in the adaptive control literature\cite{naira2010l1book}, including a low-pass filter with properly designed bandwidth can actually improve system robustness, e.g., against input delay. Simulation tests in \cref{sec:role_filter} indicate that this is also true for the proposed control technique. Additionally, the low-pass filter is only applied to the estimate of the learned model error (i.e., $\tilde d(x(t))$), and influences the estimation error bound. Such an influence will decrease when the learned model improves.}
\end{remark}
\subsection{Robust Riemann Energy Condition under Learned Dynamics}\label{sec:sub-robust-riem-energy-condition}
\cref{sec:sub-ccm-review} demonstrates that with a nominal system and a CCM established for it, one can devise a controller to limit the rate of decrease of the Riemann energy,  as described by \cref{eq:Edot-E-ineq}. Now, given uncertain dynamics with the learned model in  \cref{eq:dynamics-w-dnn}, and the planned trajectory $(\xstar(\cdot), \ustar(\cdot))$ using the learned model, under the condition \cref{eq:u-traj-condition-for-feasibility},  the condition \cref{eq:Edot-E-ineq} now becomes 
\begin{equation}
    \label{eq:Edot-E-ineq-detailed-under-uncertainty}
\hspace{-2mm}\gamma_s^\top\!(1,t)M(x(t))\dot x(t) -  \gamma_s^\top\!(0,t)M(\xstar(t)) F_l(\xstar(t)\!,\ustar(t))
\!\leq\! \!- \!\lambda E(\xstar(t)\!,x(t)), \hspace{-1mm}
\end{equation}
where     $\dot{x}(t) = f(x(t))+B(x(t))(u(t) + d(x(t)))$ denotes the true dynamics evaluated at $x(t)$, and 
$F_l(x,u)$ denotes the learned dynamics defined in \cref{eq:Fl-defn}. Obviously, \cref{eq:Edot-E-ineq-detailed-under-uncertainty} is {\it not implementable} since it depends on the uncertainty $d(x(t))$. 
However, 
by estimating the value of $d(x(t))$ at each time $t$ with a computable estimation error bound, we could establish a robust condition for \cref{eq:Edot-E-ineq-detailed-under-uncertainty} \cite{zhao2022robust-ccm-de}. In particular, consider the filtered disturbance estimate introduced in \cref{sec:filter-dist} that estimates $d(x(t))$ as $\check d(t)$ at each time $t$ with an error bound $\bardeltade(t)$ defined in \cref{eq:estimation_error_bound_filtered}.  
Then, a sufficient condition for \cref{eq:Edot-E-ineq-detailed-under-uncertainty} can be obtained as 
{
\begin{align}
    \gamma_s^\top\!(1,t)M(x)\dot{\check x}-  \gamma_s^\top\!(0,t)M(\xstar)F_l(\xstar,\ustar) +  \norm{\gamma_s^\top\!(1,t)M(x)B(x)}\bardeltade(t) \!\leq \!- \lambda E(\xstar,x), 
\label{eq:Edot-E-ineq-detailed-robust}
\end{align}}where 
$\dot{\check x} (t)\trieq  f(x)+B(x)(u(t) +\check d(t)).
$ Moreover, since $M(x)$ satisfies the CCM condition \cref{eq:ccm-condition-strong}, control input $u(t)$ that satisfies \eqref{eq:Edot-E-ineq-detailed-robust} will exist  for any $t\geq 0$, irrespective of the size of $\delta$, \blue{in the absence of} control limits, i.e., $\mcU =\mbR^m$. We call condition \cref{eq:Edot-E-ineq-detailed-robust} the {\it robust Riemann energy} (RRE) condition.
\begin{remark}
From \cref{sec:sub-robust-riem-energy-condition,sec:sub-disturbance-estimation}, we see that the disturbance estimate provided by \cref{eq:state-predictor,eq:estimation_law} and incorporated in the RRE condition \cref{eq:Edot-E-ineq-detailed-robust} is the discrepancy between the true dynamics and the nominal dynamics without the learned model (i.e., $d(x)$). 
 Alternatively, we can easily adapt the estimation law \cref{eq:state-predictor,eq:estimation_law} to estimate $\tild(x)$ (i.e., the discrepancy between learned dynamics \cref{eq:dynamics-learned} and true dynamics), and adjust the RRE condition accordingly. However, 
as characterized in \cref{eq:estimation_error_bound}, the EEB depends on the local Lipschitz bound of the uncertainty to be estimated, which indicates that a Lipschitz bound for $\tild(x)$ is needed to establish the EEB for it. 
\end{remark}
\subsection{Guaranteed Trajectory Tracking Under Learned Dynamics}\label{sec:robust-contraction-DE}
Similar to \cref{sec:sub-ccm-review}, we can compute a control input at each time $t$ to satisfy the RRE condition \cref{eq:Edot-E-ineq-detailed-robust}. In practice, one may want to compute $u(t)$ with a minimal $ \norm{u(t)-\ustar(t)}$, which can be achieved by setting $u(t) = \ustar(t) + k^\ast(t,\xstar,x)$  with $k^\ast(t,\xstar,x)$ obtained via solving a QP problem:
\begin{align} 
   & k^\ast(t,\xstar,x)=\argmin_{k\in\mbR^m} \norm{k}^2   \label{eq:qp-robust} \\
\textup{ s.t. } &   \phi_0(t,\xstar,x) + \phi_1^\top(\xstar,x)k\leq 0
   \label{eq:robust-reim-energy-condition-final}
\end{align}
at each $t\geq 0$, where 
\begin{align}
    \phi_0(t,\xstar,x) \trieq & ~\gamma_s^\top(1,t)M(x)( f(x)+B(x)(\ustar(t)+\check d(t)))+ \nonumber \\
    & ~\norm{\gamma_s^\top\!(1,t)M(x)B(x)}\bardeltade(t)  -  \gamma_s^\top(0,t)M(\xstar) F_l(\xstar\!,\ustar)+\lambda E(\xstar,x)\\
    \phi_1(\xstar,x)  \trieq &~B^\top(x)M(x)\gamma_s(1,t),\label{eq:phi1-defn}
\end{align}
and
\cref{eq:robust-reim-energy-condition-final} is an equivalent representation of \cref{eq:Edot-E-ineq-detailed-robust}, 
$\check d(t)$ is the filtered disturbance estimate via \cref{eq:check-d-filter-defn,eq:state-predictor,eq:estimation_law,eq:dcheck-hsigma-relation}, $\bardeltade(t)$ is defined in \cref{eq:estimation_error_bound_filtered}, and $F_l(\cdot,\cdot)$ is defined in \cref{eq:Fl-defn}. The problem \cref{eq:qp-robust} 
is commonly referred to as a min-norm problem and can be analytically solved by \cite{freeman2008robust-clf-book} 

\begin{equation*}
    k^\ast \! = \!\left\{
    \begin{split}
      & 0 \quad & & \textup{ if } \phi_0(t,\xstar,x)\!\leq \!0, \\
      & -\!{\textstyle \frac{\phi_0(t,\xstar,x)\phi_1(\xstar,x)}{\norm{\phi_1(\xstar,x)}^2} }& & \textup{ else }  
    \end{split}
    \right.
\end{equation*}

\begin{remark}
The proposed controller is inspired by the \lonew adaptive control theory \cite{naira2010l1book}. In fact, we adopt the estimation mechanism (the PWCE law in \cref{eq:state-predictor,eq:estimation_law}) used within an \lonew controller. However, instead of directly canceling the estimated disturbance as one would do with an \lonew controller, the proposed approach incorporates the estimated uncertainty and the EEBs into the robust Riemann energy condition \cref{eq:robust-reim-energy-condition-final} to compute 
the control signal, which ensures exponential convergence of actual trajectories to desired ones. 
\end{remark}


{The following lemma establishes a bound on the control input given by solving \cref{eq:qp-robust}. 
\begin{lemma}\label{lemma:u-bnd-uncertain}
Suppose Assumptions~\ref{assump:uniform-err-bnd}--\ref{assump:ccm-exists-for-nominal-sys} hold, and the uncertainty estimate error is bounded according to \cref{eq:estimation_error_bound_filtered}. 
Moreover, suppose  $\xstar,~x\in \mcX$ such that $\mathtt d(\xstar,x)\leq \bar {\mathtt d}$ for a scalar constant $\bar {\mathtt d}\geq 0$. Then, the control effort from solving \cref{eq:qp-robust} is bounded as \vspace{-5mm}
\begin{equation}\label{eq:k-bnd-actual}
    \norm{ k^\ast(t,\xstar,x)}\leq \overbrace{\large \frac{\bar {\mathtt d}}{2}\sup_{x\in\mcX}{\frac{\bar \lambda(L^{T}\hat GL)}{\underline \sigma_{> 0}(B^T L^{-1})}}+\bar \delta_{\tilde d} +2\bardeltade(t)}^{\trieq \bar \delta_u^L}, 
\end{equation}where 
$\hat{G} (x) \!\trieq\!\left \langle \!M\frac{\partial \hat f}{\partial x}\!\right\rangle \!+ \!\partial_{\hat f} M \!+\!2\lambda M $ with $\hat f$ defined in \cref{eq:Fl-defn}.  
\end{lemma}
\begin{proof}
Due to Assumption~\ref{assump:uniform-err-bnd}, $M(x)$ is a valid CCM for the learned dynamics \cref{eq:dynamics-learned}, according to \cref{lemma:ccm_condition_uncertain_vs_nominal}. By applying \cref{lemma:u-bnd-nominal} to the learned dynamics \cref{eq:dynamics-learned}, we can obtain that at any $t\geq 0$, for any feasible $\xstar(t)$ and $x(t)$ satisfying $\mathtt d(\xstar,x)\leq \bar {\mathtt d}$,
there  always exists a $k_0$ subject to $\norm{k_0}\leq \frac{\bar {\mathtt d}}{2}\sup_{x\in\mcX}{\frac{\bar \lambda(L^{T}\hat GL)}{\underline \sigma_{> 0}(B^T L^{-1})}}$  such that \begin{equation}\label{eq:learned-k0-ineq}
  \gamma_s^\top\!(1,t)M(x)\left(F_l(x,\ustar)+\! k_0 \right) \!-  \gamma_s^\top\!(0,t)  M(\xstar)F_l(\xstar,\ustar)
   \!   + \lambda E(\xstar,x) \leq\! 0.
\end{equation}
 Setting 
 \begin{equation}\label{eq:k-defn}
     k=k_0+\hat d(x) - \check d(t) - 
 {\phi_1 \bardeltade(t)}/{\norm{\phi_1}}
 \end{equation}
 with $\phi_1$ defined in \cref{eq:phi1-defn}, we have 
 \begin{align}
  \hspace{-6mm}  \textup{LHS of }\cref{eq:robust-reim-energy-condition-final} & = \textup{LHS of }\cref{eq:learned-k0-ineq} \! \underbrace{-\phi_1^T \phi_1 {\textstyle \frac{ \bardeltade(t)}{\norm{\phi_1}}} \!+\! \norm{\phi_1} \bardeltade(t)}_{=0} \leq 0 \hspace{-2mm} \nonumber 
  \vspace{-2mm}
 \end{align}
 which implies that $k$ defined in \cref{eq:k-defn} is a feasible solution for \cref{eq:qp-robust}. Therefore, the optimal solution  for \cref{eq:qp-robust} satisfies 
   $\norm{ k^\ast(t,\xstar,x)} \leq  \norm{ k}    = \norm{k_0\!+\!\hat d(x) \!-\! d(x) \!+ \!d(x)\!- \!\check d(t) \!- \!\phi_1 { \bardeltade(t)}/{\norm{\phi_1}}} 
   \leq  \norm{k_0} \!+ \!\|{\hat d(x) \!-\! d(x)}\| \!+\! \|{d(x)\!- \!\check d(t)}\| \!+\! \bardeltade(t) 
   \leq \frac{\bar {\mathtt d}}{2}\sup_{x\in\mcX}{\frac{\bar \lambda(L^{T}\hat GL)}{\underline \sigma_{> 0}(B^T L^{-1})}}\!+\!\bar \delta_{\tilde d} \!+\!2\bardeltade(t),$
which proves \cref{eq:k-bnd-actual}. \hfill $\blacksquare$ 
\end{proof}}


{\color{black} The main theoretical result of the paper is stated below.
\begin{theorem} \label{them:DE-CCM}
Consider an uncertain system \cref{eq:dynamics-w-dnn} with learned dynamics  $\hat d(x)$. Suppose Assumptions~\ref{assump:lipschitz-bound-d-B}--\ref{assump:uniform-err-bnd} hold and Assumption~\ref{assump:ccm-exists-for-nominal-sys} holds with positive constants $\alpha_1$,  $\alpha_2$ and $\lambda$. Furthermore, suppose the initial state vector $x(0)\in\mcX$ and  a continuous trajectory ($\xstar(\cdot),~\ustar(\cdot)$) planned using the learned dynamics \cref{eq:dynamics-learned} satisfy \cref{eq:u-traj-condition-for-feasibility,eq:ustar-bardeltauL-condition,eq:xstar0-x0-condition},
{
\begin{align}
& \ustar(t) \in \mcU\ominus \left\{y\in\mbR^m: \norm{y}\leq \bar\delta_u^L \right\}, \label{eq:ustar-bardeltauL-condition} \\
 \hspace{-3mm}  \Omega(t)\!\trieq\! & {\left \{y\!\in \!\mbR^n\!: \norm{y} \leq \norm{\xstar(t)}\! +\! \!\sqrt{\textstyle \frac{\alpha_2}{\alpha_1}}\norm{x(0)\!-\!\xstar(0)}e^{-\lambda t} \right \}} \subset \textup{int}(\mcX), \label{eq:xstar0-x0-condition}
\end{align}}for any $t\geq 0$, where $\textup{int}(\cdot)$ denotes the interior of a set, $ \bar \delta_u^L $ is defined in \cref{eq:k-bnd-actual} with 
\begin{equation}\label{eq:bard-defn}
\bar{\mathtt d} = \mathtt d(\xstar(0),x(0))e^{-\lambda t}+\varepsilon
\end{equation}
for an arbitrary $\varepsilon>0$. Then, 
the control law $u(t) = \ustar(t)+k^\ast(t,\xstar,x)$ with $k^\ast(t,\xstar,x)$ solving \cref{eq:qp-robust} ensures that $u(t)\in \mcU$ and $x(t)\in \mcX$ for all $t\geq 0$, 

and universally exponentially stabilizes the  system \cref{eq:dynamics-w-dnn} in the sense that \cref{eq:ues-math} holds. \end{theorem}
\begin{proof} 
Since Assumption~\ref{assump:uniform-err-bnd} holds, we have $\tild(x(0))\leq \bardeltatild$ since $x(0)\in\mcX$. The EEB in \cref{eq:estimation_error_bound_filtered}  holds for  $t=0$  due to the fact that $\norm{\check d(0) - d(x(0)) } = \norm{\hat d(0) - d(x(0))} = \norm{\tilde d(0)} \leq \bardeltatild \leq \bardeltade(0)$. 
As a result, the bound in \cref{eq:k-bnd-actual} holds for $t=0$ with $\bar{\mathtt d} = \mathtt d(\xstar(0),x(0))+\varepsilon$, which, together with \cref{eq:ustar-bardeltauL-condition}, implies $u(0) = \ustar(0) + k^\ast(0,\xstar(0),x(0))\in \mcU$.

 We next prove $x(t)\in \mcX$ and $u(t)\in\mcU$ for all $t\geq 0$ by contradiction.  Assume this is not true. Since $x(t)$ and $u(t)$ are continuous\footnote{$u(t)=\ustar(t)+k^\ast(t,\xstar,x)$ is continuous because $\ustar(t)$ is continuous, and $k^\ast(t,\xstar,x)$ is also continuous according to \cref{eq:qp-robust}.} 
 there must exist a time $\tau$ such that
\begin{align}
 x(t) &\in \mcX,\quad u(t)\in \mcU,  \  \forall t\in[0,\tau), \label{eq:xu-in-XU-0-tau^-} \\
    x(\tau) &\notin \mcX \textup{ or } u(\tau)\notin \mcU. \label{eq:x-tau-notin-X-assump}
\end{align}
Now, examine the evolution of the system within the interval  $[0,\tau)$. Due to \cref{eq:xu-in-XU-0-tau^-},  the error bound in \cref{eq:estimation_error_bound} holds within $[0,\tau)$. Also, notice that 
condition \cref{eq:u-traj-condition-for-feasibility} ensures that $\xstar(\cdot)$ represents a feasible trajectory for the uncertain system \cref{eq:dynamics-w-dnn} with input constraints according to \cref{lemma:feasible-traj-true-sys}. Therefore, the control law $u(t) = \ustar(t) + k^\ast(t,\xstar,x)$ given by \cref{eq:qp-robust} 
 guarantees the fulfillment of of the RRE condition \cref{eq:Edot-E-ineq-detailed-robust}, and consequently, of condition \cref{eq:Edot-E-ineq-detailed-under-uncertainty}, which further implies \cref{eq:Edot-E-ineq}, or equivalently,
\begin{equation}\label{E-E0-relation}
    E(\xstar(t),x(t))\leq E(\xstar(0),x(0))e^{-2\lambda t},
\end{equation}
for all $t$ in $[0,\tau)$. \cref{them:synthesis-nominal} and \cref{E-E0-relation} indicate that $\norm{x(t)} \leq \norm{\xstar(t)}+\sqrt{\frac{\alpha_2}{\alpha_1}} \norm{x(0)-\xstar(0)} e^{-\lambda t}$ $\forall t\in [0,\tau)$, which, together with \cref{eq:xstar0-x0-condition},  implies that $x(t)$ stays in the interior of  $\mcX$,  $\forall t\in[0,\tau)$. Considering that $x(t)$ is continuous, we have $x(\tau) \in \mcX$, 
contradicting the first condition in \cref{eq:x-tau-notin-X-assump}. As a result, we have 
\begin{equation}\label{eq:x-in-X-0tau}
    x(t)\in \mcX,\  \forall t\in[0,\tau].
\end{equation}
Now consider the second condition in \cref{eq:x-tau-notin-X-assump}. Condition \cref{E-E0-relation} implies  $ \mathtt d(\xstar(t),x(t))\leq \mathtt d (\xstar(0),x(0))e^{-\lambda t}<\bar {\mathtt d}$ for any $t$ in $[0,\tau)$, where $\mathtt d(\xstar,x)$ denotes the Riemann distance between $\xstar$ and $x$ and $\bar {\mathtt d}$ is defined in \cref{eq:bard-defn}. Considering the continuity of $\xstar(t)$ and $x(t)$, we have 
\begin{equation}\label{eq:d-bnd-0-tau}
    {\mathtt d}(\xstar(t),x(t))\leq \bar {\mathtt d},\ \forall t\in[0,\tau].
\end{equation}
Due to \cref{eq:x-in-X-0tau} and $u(t)\in \mcU$ for all $t\in[0,\tau)$ (from \cref{eq:xu-in-XU-0-tau^-}), it follows from \cref{lemma:estiamte-error-bound} that the error bound in \cref{eq:estimation_error_bound} holds in $[0,\tau]$, which, together with \cref{eq:x-in-X-0tau}, implies the filter-dependent error bound in \cref{eq:estimation_error_bound_filtered} holds  in $[0,\tau]$. This fact, along with \cref{eq:d-bnd-0-tau}, indicates that \cref{eq:k-bnd-actual} holds, i.e., $\norm{k^\ast(t,\xstar,x)}\leq  \bar \delta_u^L $, for all $t$ in $[0,\tau]$. Further considering \cref{eq:ustar-bardeltauL-condition}, we have $u(t)=\ustar(t)+k^\ast(t,\xstar,x)\in \mcU$ for all $t$ in $[0,\tau]$, which, together with \cref{eq:x-in-X-0tau}, contradicts \cref{eq:x-tau-notin-X-assump}. 
Therefore, we conclude that $x(t)\in \mcX$ and $u(t)\in \mcU$ for all $t\geq 0$. From the development of the proof, it is clear that the the UES of the closed-loop system in $\mcX$ with the control law given by the solution of \cref{eq:qp-robust} is achieved. 
\hfill $\blacksquare$
\end{proof}}

\begin{remark}
Theorem~\ref{them:DE-CCM} asserts that under specific conditions, the proposed controller ensures {\em exponential convergence} of the actual state trajectory to a desired trajectory that may be planned using a potentially poorly learned model. 
{On the other hand, the improved accuracy of the learned model reduces the error bound $\bardeltatild$ and thus the EEB $\bardeltade(t)$, and improve the robustness of the controller, as will be demonstrated by simulation results in \cref{sec:sim}.}
\end{remark}

\begin{remark}\label{remark:discrete-implementation-issue}
The exponential convergence assurance described in Theorem \ref{them:DE-CCM} relies on a continuous-time implementation of the controller. However, in practical applications, controllers are typically implemented on digital processors using fixed sampling times. Consequently, the property of exponential convergence may be slightly compromised, as noted in Section \ref{sec:sim}.
\end{remark}

\subsection{Discussions}
With consideration of \cref{eq:k-bnd-actual}, condition \cref{eq:ustar-bardeltauL-condition} in \cref{them:DE-CCM} requires that when planning nominal input trajectories, enough control authority should be left for the control law defined by \cref{eq:qp-robust}. Additionally, the control effort bound $\bar\delta_u^L$ depends on the error bound of the learned model,  $\bardeltatild$; a poorly learned model with  large $\bardeltatild$ will lead to large $\bar\delta_u^L$, making it challenging to satisfy \cref{eq:ustar-bardeltauL-condition}.

    While \cref{them:DE-CCM} only mentions the trajectory tracking performance, we will empirically show the benefits of learning in facilitating better trajectory planning and improving the robustness of the controller in Section~\ref{sec:sim}.

\section{Simulation Results}\label{sec:sim}
We validate the proposed learning control approach on a 2D quadrotor introduced in \cite{singh2019robust}.  \blue{We selected this example because the 2D quadrotor, while simpler than a 3D quadrotor, presents significant control challenges due to its nonlinear and unstable dynamics. Additionally, it offers a suitable scenario for demonstrating the applicability of the proposed control architecture, specifically, maintaining safety and reducing energy consumption in the presence of disturbances.}. 
The dynamics of the vehicle are given by
    {\setlength{\arraycolsep}{1pt}
{\begin{equation*}
  \dot x =  \begin{bmatrix}
    \dot p_x \\
    \dot p_z \\
    \dot \phi \\
    \dot v_x \\
    \dot v_z \\
    \ddot \phi 
    \end{bmatrix} =
    \begin{bmatrix}
    v_x \cos(\phi)-v_z \sin(\phi)\\
    v_x\sin(\phi) + v_z \cos(\phi) \\
    \dot \phi \\
    v_z\dot \phi - g\sin(\phi) \\
    -v_x\dot \phi - g\cos(\phi) \\
    0  
    \end{bmatrix} +
    \begin{bmatrix}
    0 & 0 \\
    0 & 0 \\
    0 & 0 \\
    0 & 0 \\
    \frac{1}{m} & \frac{1}{m} \\
    \frac{l}{J} & -\frac{l}{J}
    \end{bmatrix}(u +d(x)),
\end{equation*}}}where $p_x$ and $p_z$ represent the positions in $x$ and $z$ directions, respectively, $v_x$ and $v_z$ denote the lateral velocity and velocity along the thrust axis in the body frame. $\phi$ is the angle between the $x$ direction of the body frame and the $x$ direction of the inertia frame. The input vector $u=[u_1,u_2]$ contains the thrust force produced by the two propellers. $m$ and $J$ represent the mass and moment of inertia about the out-of-plane axis, respectively. $l$ denotes the distance between each propeller and the vehicle center, and $d(x)$ signifies the unknown disturbances exerted on the propellers. Specific parameter values are assigned as follows: $m=0.486$ kg, $J=0.00383~\textup{Kg m}^2$, and $l=0.25$ m. 
We choose $d(x)$ to be $d(x) = \rho(x,z)\cdot 0.5(v_x^2+v_z^2)[1,1]^\top$, where $\rho(x,z) = 1/({(x-5)^2+(y-5)^2+1})$ represents the disturbance intensity whose values in a specific location are denoted by the color at this location in \cref{fig:results-diff-learning}. 
We consider three navigation tasks with different start and target points, 
while avoiding the three circular obstacles as illustrated in \cref{fig:results-diff-learning}.  The planned trajectories were computed using OptimTraj \cite{kelly2017intro-traj-opt} to minimize a cost function $J=\int_0^{T_a}\norm{u(t)}^2dt+5T_a$, where $T_a$ denotes the arrival time. The actual start points for Tasks~1$\sim$3 were  
intentionally set to be different from the desired ones used for trajectory planning. 

\subsection{Control Design}
For computing a CCM, we parameterized the CCM $W$ by $\phi$ and $v_x$,  and set the convergence rate $\lambda$ to 0.8.  Additionally, 
we enforced the constraints: $x\in \mcX \trieq [0,15] \times [0,15] \times [-\frac{\pi}{3},\frac{\pi}{3}] \times  [-2,2] \times [-1,-1] \times[-\frac{\pi}{3},\frac{\pi}{3}]$,  $u\in \mcU\trieq [0, \frac{3}{2}mg] \times [0, \frac{3}{2}mg]$.  
More details about synthesizing the CCM and computing the geodesic can be found in \cite{zhao2022tube-rccm-ral}. All the subsequent computations and simulations except DNN training (which was done in Python using PyTorch) were done in Matlab R2021b.
For estimating the disturbance using \cref{eq:state-predictor,eq:estimation_law,eq:dcheck-hsigma-relation}, we set $a=10$. It is straightforward to confirm  that $L_d = 4$, $b_d = 3.54$, and $L_B=0$ (due to the constant nature of $B$) satisfy \cref{eq:lipschitz-cond-d-B}. By discretizing the space $\mcX$ into a grid, one can determine the constant $\phi$ in \cref{lemma:estiamte-error-bound} to be $\phi= 783.96$. Based on \cref{eq:estimation_error_bound}, to achieve an error bound $\bar \delta_\textrm{de}^0=0.1$, the estimation sampling time needs to satisfy $T_s\leq 2.04\times 10^{-7} $ s.  However, as mentioned in Remark~\ref{remark:est_err_bnd_for_practical_implementation}, the error bound calculated from \cref{eq:estimation_error_bound} might be overly conservative. Through simulations, we determined that a sampling time of $0.002$~s was sufficient to achieve the desired error bound and thus set $T_s=0.002$ s. 

\begin{figure*}[tb]
     \centering
     \begin{subfigure}[b]{0.32\textwidth}
         \centering
         \includegraphics[width=\textwidth]{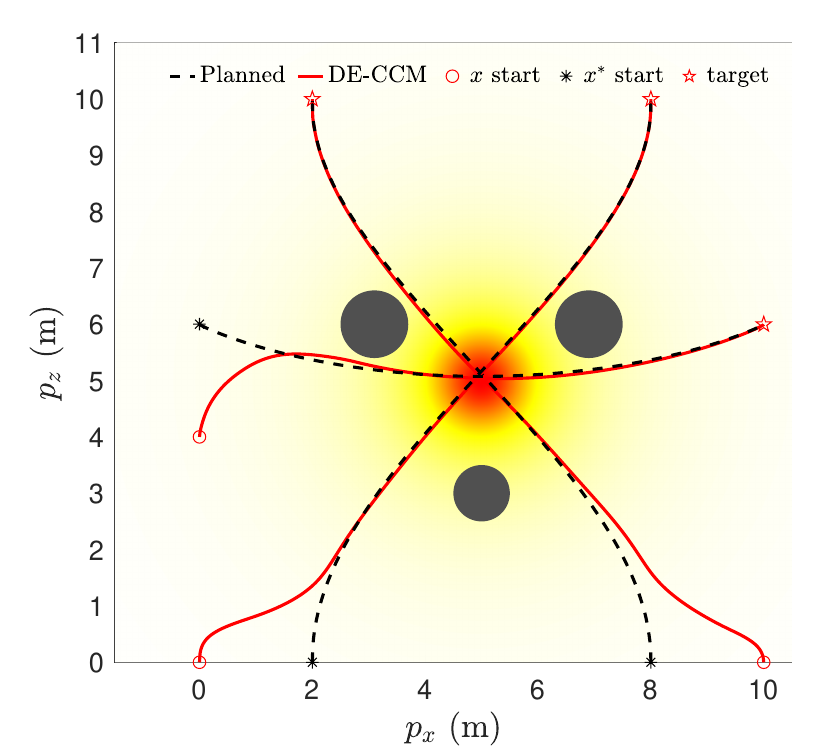}
     \end{subfigure}
     \begin{subfigure}[b]{0.32\textwidth}
         \centering
         \includegraphics[width=\textwidth]{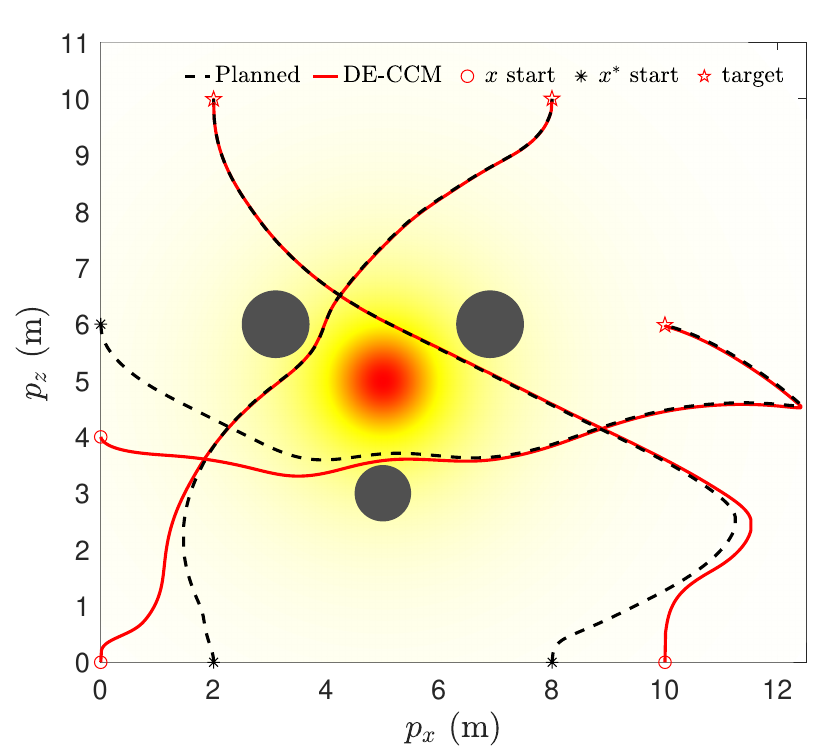}
     \end{subfigure}
     \begin{subfigure}[b]{0.32\textwidth}
         \centering
         \includegraphics[width=\textwidth]{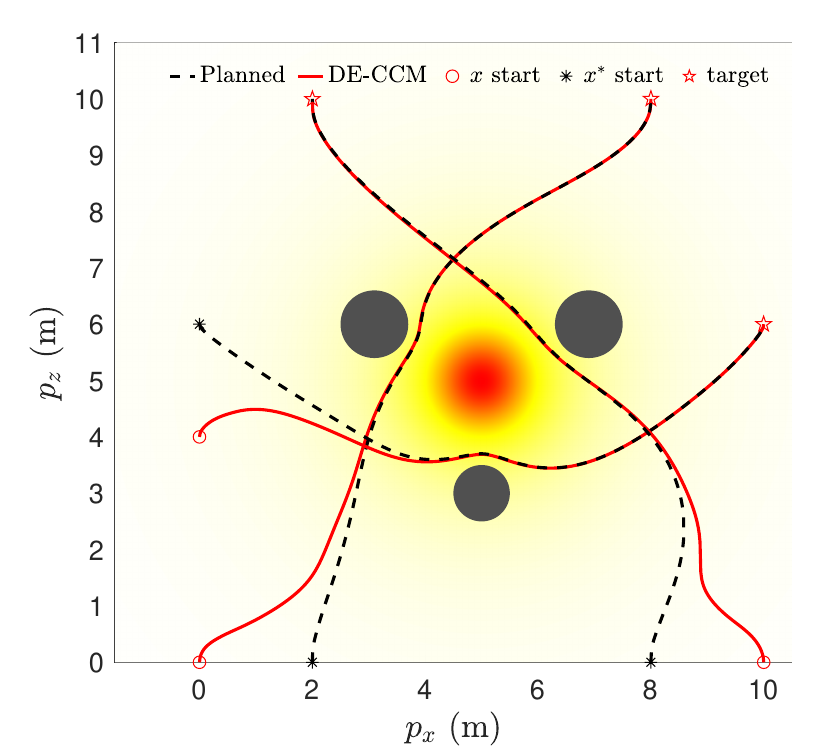}
     \end{subfigure} \vspace{-3mm} \\
      \begin{subfigure}[b]{0.32\textwidth}
         \centering
         \includegraphics[width=\textwidth]{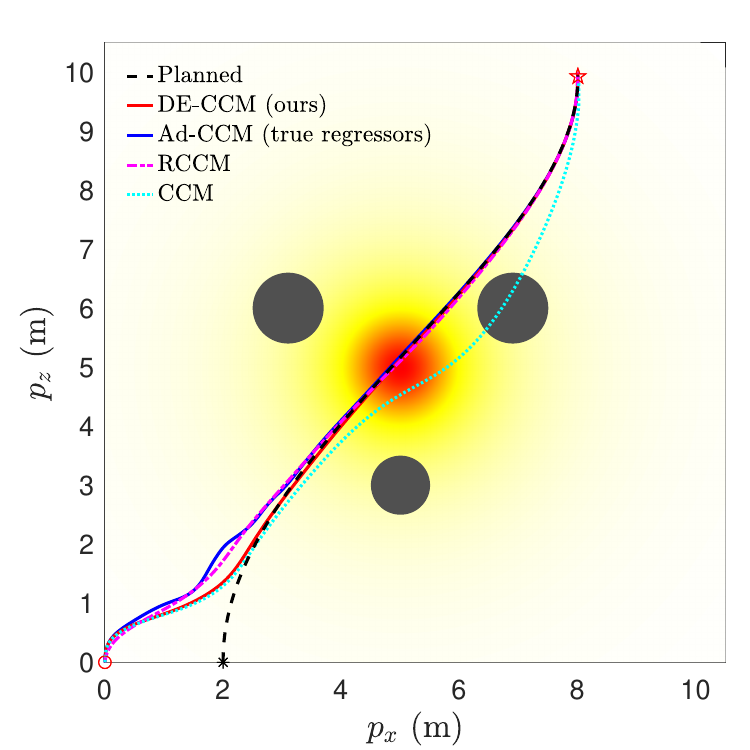}
         \caption{No Learning}
     \end{subfigure}
     \begin{subfigure}[b]{0.32\textwidth}
         \centering
         \includegraphics[width=\textwidth]{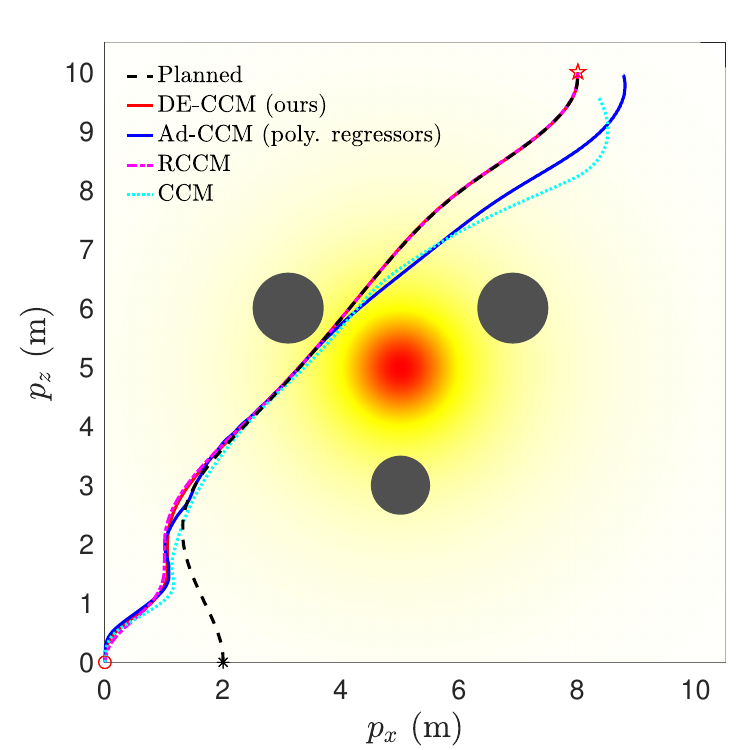}
         \caption{Moderate Learning}
     \end{subfigure}
     \begin{subfigure}[b]{0.32\textwidth}
         \centering
         \includegraphics[width=\textwidth]{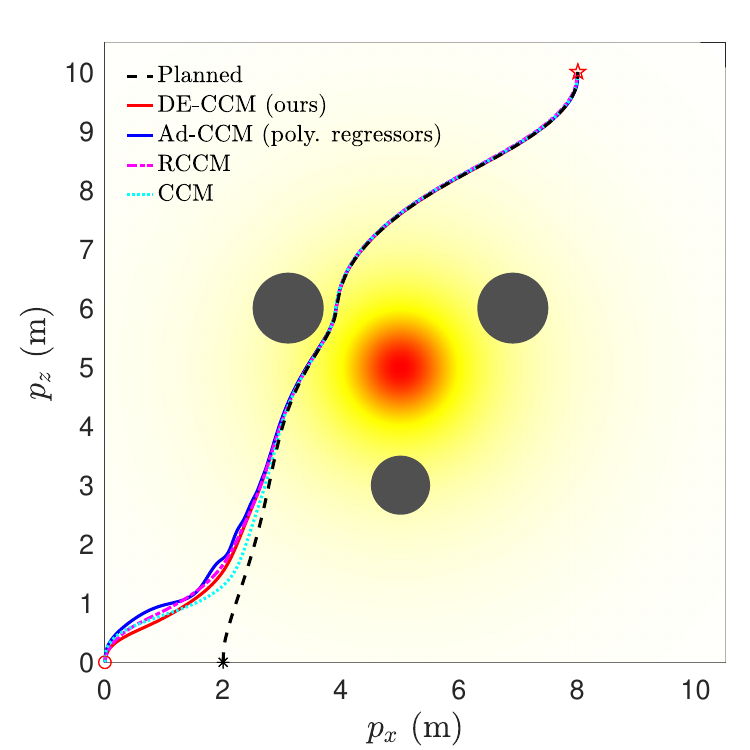}
         \caption{Good Learning}
     \end{subfigure}     
        \caption{Top: Planned and executed trajectories under our proposed DE-CCM for Tasks 1$\sim$3  under different learning scenarios. Bottom: Tracking performance of DE-CCM, Ad-CCM \cite{lopez2020adaptive-ccm}, RCCM (which can be seen as a special case of DE-CCM with uncertainty estimate set to $0$), and CCM for Task 1 under different learning scenarios
        Start points of planned and actual trajectories are intentionally set to be different to reveal the transient response under different scenarios.
        }
        \label{fig:results-diff-learning}
\end{figure*}
  
  \subsection{Performance and Robustness Across Learning Transients}
  Figure~\ref{fig:results-diff-learning} (top) illustrates the planned and actual trajectories under the proposed controller utilizing the RRE condition and disturbance estimation (referred to as \textbf{DE-CCM}), in the presence of {\it no, moderate} and {\it good} learned model for uncertain dynamics. For these results, we did {\it not}  low-pass filter the estimated uncertainty, which is equivalent to setting $\mcC(s) = I$ in \cref{eq:check-d-filter-defn}. 
  Spectral-normalized DNNs \cite{miyato2018spectral-normalization-abbrev} (\cref{remark:uniform-err-bnd}) with four inputs, two outputs and four hidden layers were used for model learning. 
For training data collection, we planned and executed nine trajectories with different start and end points, as shown in \cref{fig:safe_explore}. 
The data collected during the execution of these trajectories were used to train the {\it moderate} model. However, these nine trajectories were still not enough to fully explore the state space. 
 Sufficient exploration of the state space is necessary to learn a good uncertainty model. Thanks to the performance guarantee, the DE-CCM controller facilitates safe exploration, as demonstrated in \cref{fig:safe_explore}. 
For illustration purposes, we directly used the true uncertainty model to generate the data and used the generated data for training, which yielded the {\it good} model. 
\begin{figure}[ht]
    \centering
    \includegraphics[width=0.34\textwidth]{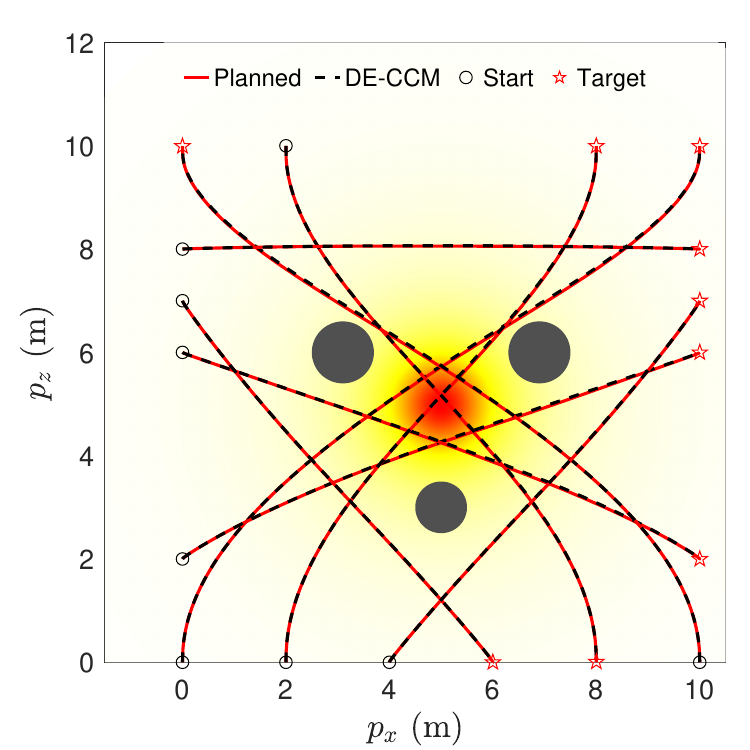}
    \caption{Safe exploration of the state space for learning uncertain dynamics under DE-CCM}
    \label{fig:safe_explore}
\end{figure}

As depicted in \cref{fig:results-diff-learning}, the actual trajectories generated by DE-CCM  exhibited expected convergence towards the desired trajectories  during the learning phase across all three tasks. The minor deviations observed between the actual and desired trajectories under DE-CCM  can be attributed to the finite step size used in the ODE solver employed for simulations (refer to Remark \ref{remark:discrete-implementation-issue}).
The planned trajectory for Task~2 in the moderate learning case seemed weird near the end point. This is because the learned model was not accurate in the area due to a lack of sufficient exploration. However, with the DE-CCM controller, the quadrotor was still able to track the trajectory. 
Figure~\ref{fig:disturbance} depicts the trajectories of true, learned and estimated disturbances in the presence of no and good learning for Task~1, while the trajectories for Tasks~2 and 3  are similar and thus omitted. 
One can see that the estimated disturbances were always fairly close to the true disturbances. Also, the area with high disturbance intensity was avoided under good learning, which explains the smaller disturbance encountered. 
\begin{figure}[ht]
    \centering
    \includegraphics[width=0.35\textwidth]{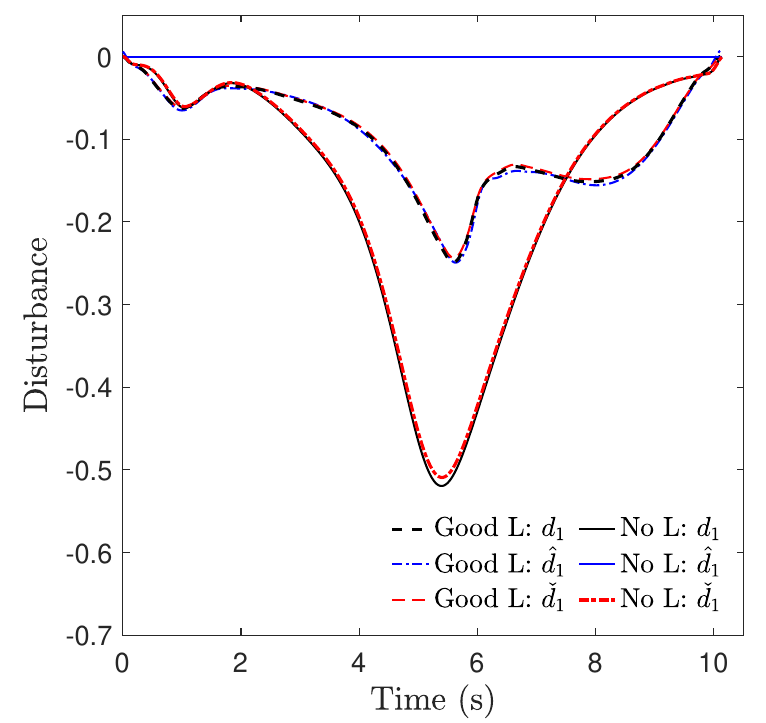}
    \caption{Trajectories of true, learned, and estimated disturbances in the presence of no and good learning for Task~1. The notations $d_1$, $\hat d_1$ and $\check d_1$ denote the first element of $d$, $\hat d$ and $\check d$, respectively.}
    \label{fig:disturbance}
\end{figure}

\begin{figure}[h]
     \centering
     \begin{subfigure}[b]{0.475\textwidth}
         \centering
         \includegraphics[width=\textwidth]{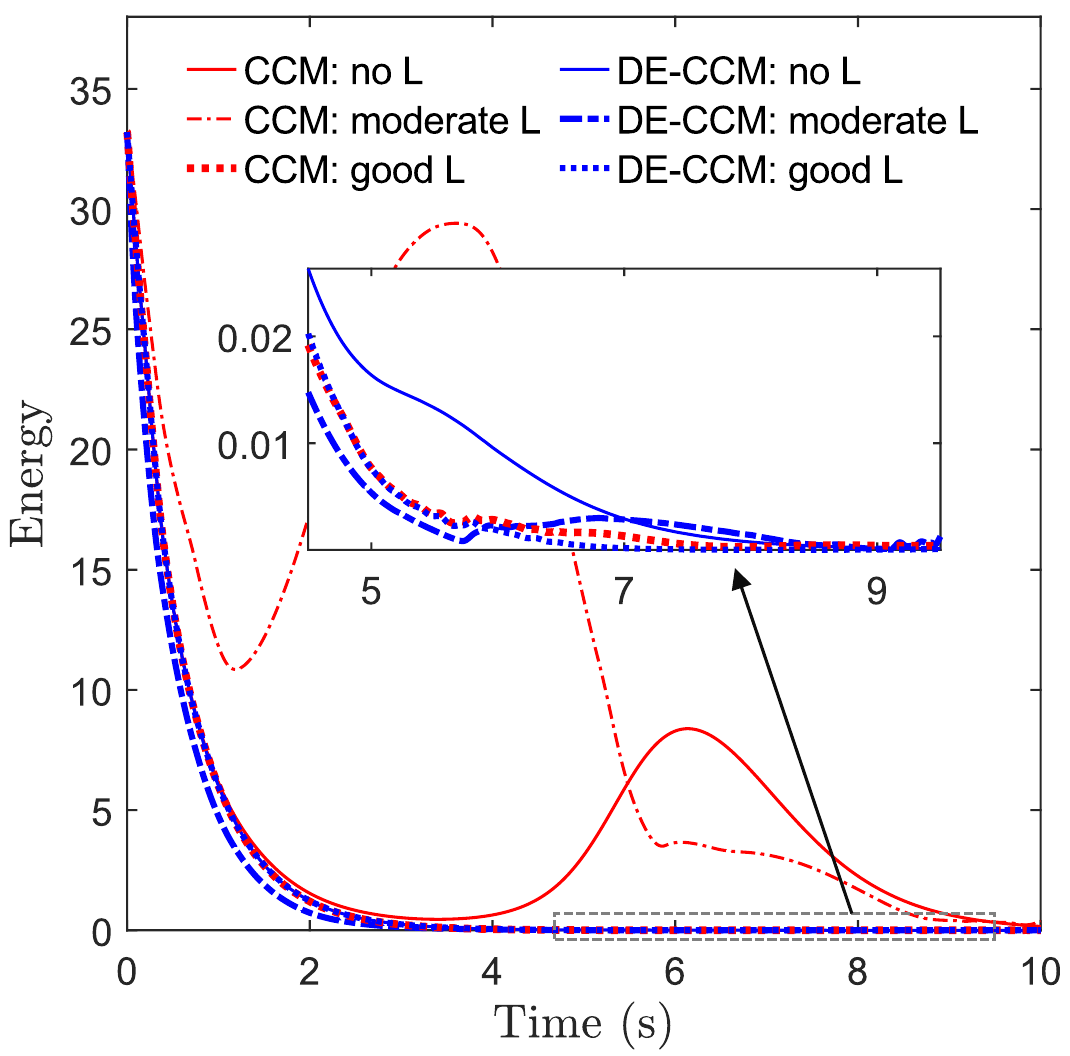}
     \end{subfigure}
     \begin{subfigure}[b]{0.48\textwidth}
         \centering
         \includegraphics[width=\textwidth]{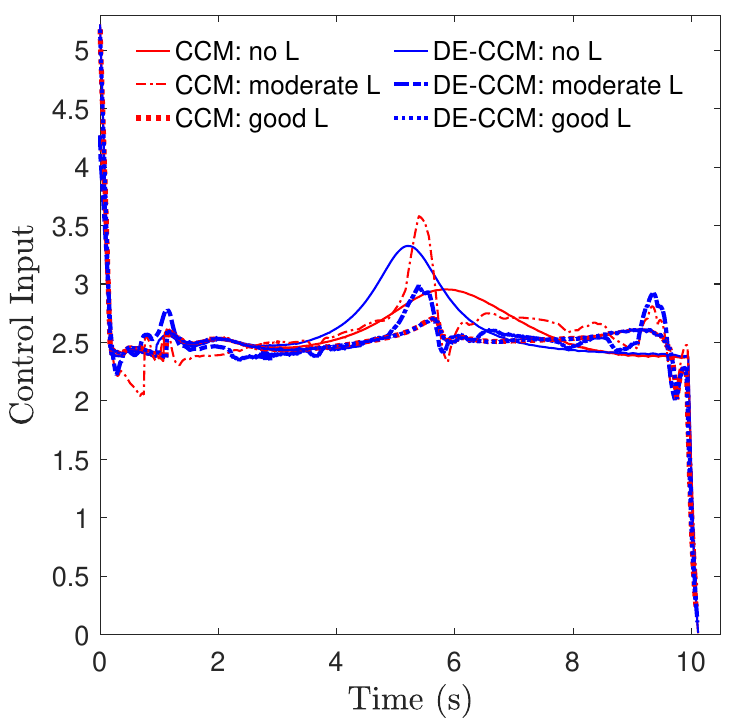}
     \end{subfigure}
     \caption{
    Trajectories of Riemann energy  $E(\xstar,x)$ (left) and the first element of control input (right), i.e., $u_1$, in the presence of no, poor and good learning for Task~1 }
     \label{fig:energy}
\end{figure}
Figure~\ref{fig:energy} shows trajectories of Riemann energy $E(\xstar,x)$ in the presence of no, and good learning for Task~1, while the trajectories for Tasks~2 and 3 are similar and thus omitted. It is evident that $E(\xstar,x)$ under DE-CCM decreased exponentially across all scenarios, irrespective of the quality of the learned mode
{\color{black}
To compare, we implemented three additional controllers, namely, a vanilla CCM controller that disregards the uncertainty or learned model error, adaptive CCM (\textbf{Ad-CCM}) controllers from \cite{lopez2020adaptive-ccm} with true and polynomial regressors, and a robust CCM (\textbf{RCCM}) controller that can be seen as a specific case of a DE-CCM controller with the uncertainty estimate equal to zero and the EEB equal to the disturbance bound $b_d$. 
Since the Ad-CCM controller \cite{lopez2020adaptive-ccm} needs a parametric structure for the uncertainty in the form of $\Phi(x)\theta$ with $\Phi(x)$ being a known regressor matrix, and $\theta$ being the vector of unknown parameters. 
For the no-learning case, we assume that we know the regressor matrix for the original uncertainty $d(x)$. For the learning cases, since we do not know the regressor matrix for the learned model error $\tilde d(x)$, we used a second-order polynomial regressor matrix:
\begin{align}
\Phi(x) = &   \begin{bmatrix}
   1 \\  1
    \end{bmatrix}\begin{bmatrix}
    1 & x & x^2 & z & z^2 & v_x & v_x^2 & v_z & v_z^2 
    \end{bmatrix}.
    \label{eq:ad-ccm-poly-regressors}
\end{align}
Figure~\ref{fig:results-diff-learning} (bottom) shows the tracking control performances yielded by these additional controllers for Task 1 under different learning scenarios. 
The observed trajectories resulting from the CCM controller showed significant deviations from the planned trajectories and occasionally encountered collisions with obstacles, except in the case of good learning. 
Additionally, Ad-CCM yielded poor tracking performance under the moderate learning case. The poor performance could be attributed to the fact that the uncertainty $\tilde d (x)$ may not have a parametric structure, or even if it does, the selected regressor may not be sufficient to represent it. RCCM achieved similar performance as compared to the proposed method, but shows weaker robustness against control input delays, as demonstrated later. 

\subsection{Improved Planning and System Robustness With Learning}

\cref{fig:cost_diff_learning} shows the costs $J$ associated with the actual trajectories achieved by DE-CCM under different learning qualities. As expected, the {\it good} model helped plan better trajectories, leading to reduced costs for all three tasks. 
It is not a surprise that the {\it poor} 
and {\it moderate} models 
led to a temporal increase in the costs for some tasks. In practice, we may not use the poorly learned model 
directly for planning trajectories. DE-CCM ensures that in case one really does so, the planned trajectories can still be tracked well. 
\begin{figure}[h]
    \centering
    \includegraphics[width=0.5\textwidth]{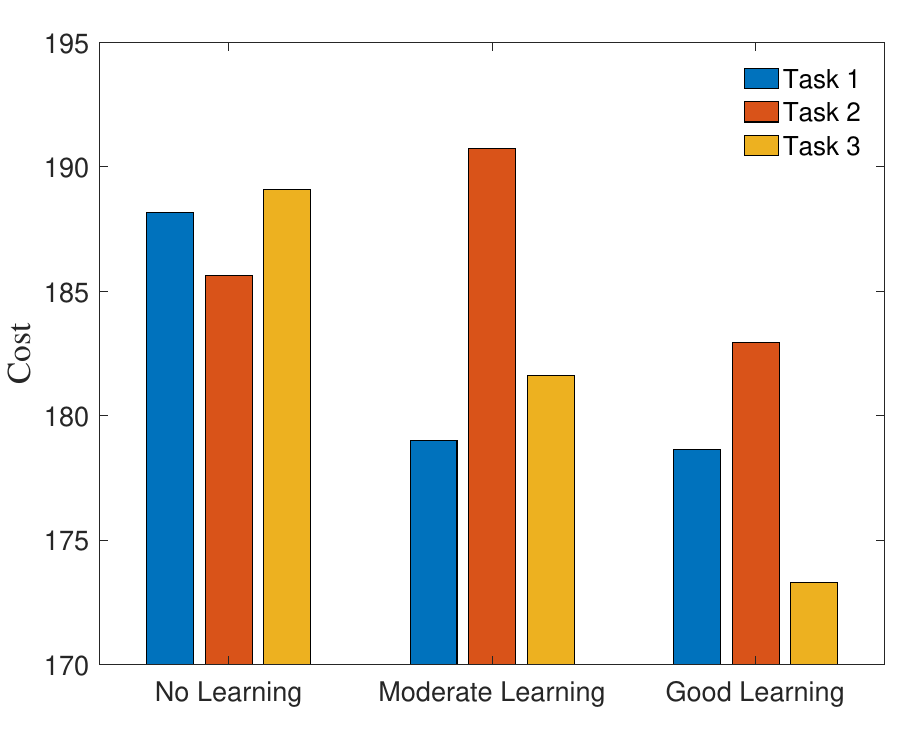}
    \caption{Costs of actual trajectories achieved by DE-CCM  throughout the learning phase}\label{fig:cost_diff_learning}
\end{figure}

\blue{The role of the low-pass filter in protecting system robustness under no learning case is illustrated in \cref{sec:role_filter}}.
We next tested the robustness of RCCM and DE-CCM against input delays under different learning scenarios.
Note that, unlike linear systems for which gain and phase margins are commonly used as robustness criteria, we often evaluate the robustness of nonlinear systems in terms of their tolerance for input delays. Under an input delay of $\Delta t$, the plant dynamics \cref{eq:dynamics} becomes $\dot x = f(x) + B(x)\left (u(t-\Delta t)+d(x)\right)$. 
For these experiments, we leveraged a low-pass filter $\mcC(s) = \frac{30}{s+30}I_2$ to filter the estimated disturbance following \cref{eq:check-d-filter-defn}.  In principle, the EEB of 0.1 used in the previous experiments would not hold anymore as the presence of the filter would lead to a larger error bound according to \cref{eq:estimation_error_bound_filtered}. However, we kept using the same EEB of 0.1 as the theoretical bound according to \cref{eq:estimation_error_bound_filtered} could be quite conservative. The Riemann energy, which indicates the tracking performance for all states, is shown in \cref{fig:delays}. One can see that under both delay cases, DE-CCM achieves smaller and less oscillatory Riemann energy, indicating better robustness and tracking performance, compared to RCCM.
\begin{figure}[h]
    \centering
    \includegraphics[width=0.46\textwidth]{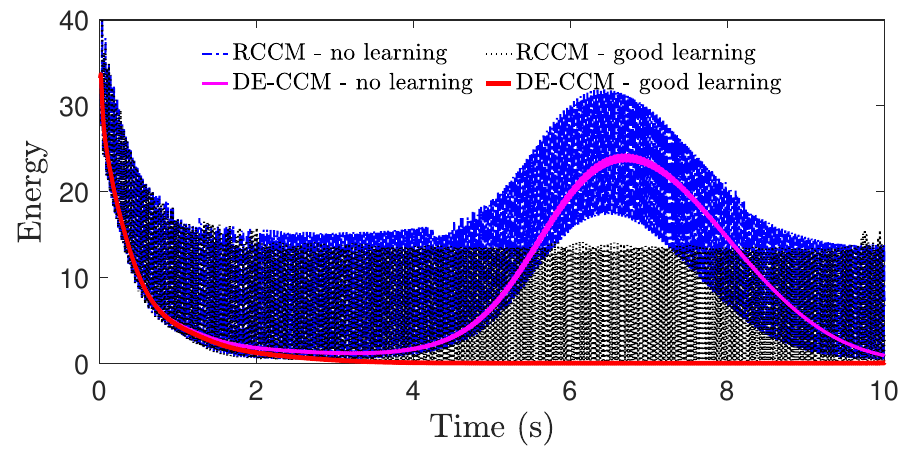}
    \includegraphics[width=0.48\textwidth]{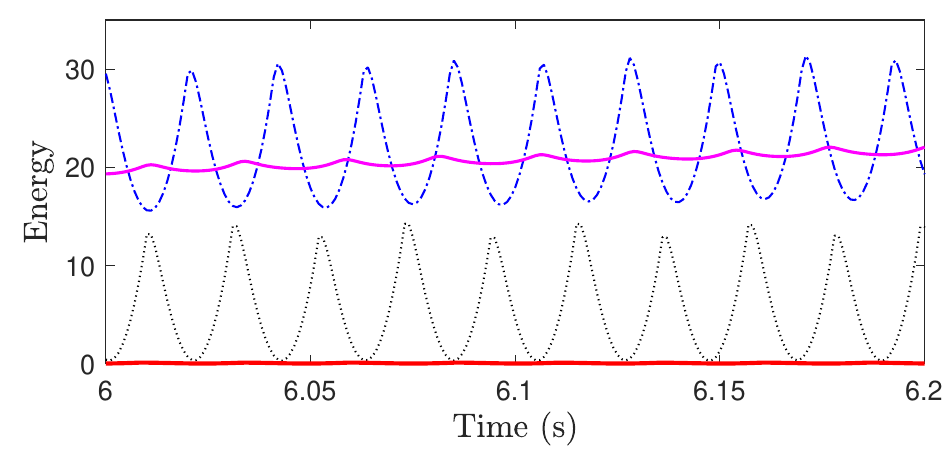}
     \includegraphics[width=0.48\textwidth]{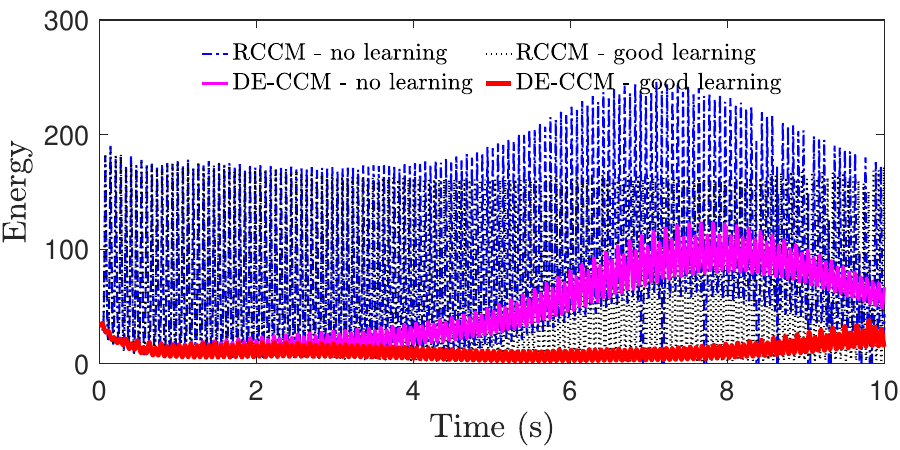}
     \includegraphics[width=0.48\textwidth]{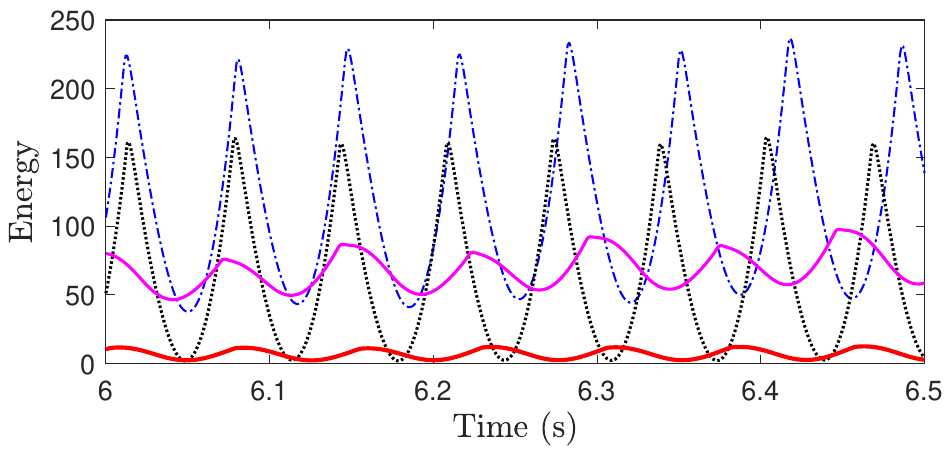}
    \vspace{-2mm}
    \caption{Riemann Energy yielded by DE-CCM and RCCM under different learning cases in with an input delay of 10 ms (top) and 30 ms (bottom). Note that the plots on the right are zoomed-in versions of the corresponding plots on the left. Learning improves the robustness of both DE-CCM and RCCM against input delays. 
    }
    \label{fig:delays}
\end{figure}
Additionally, the robustness of DE-CCM against input delays in the presence of good learning is significantly improved compared to the no-learning case, which illustrates the benefits of incorporating learning. 
This can be explained as follows. The input delay may cause the disturbance estimate $\check d^0(t)$ to be highly oscillatory and a large discrepancy between $\check d^0(t)$ and $d(x(t))$. The low pass filter $\mcC(s)$ can filter out the high-frequency oscillatory component of  $\check d^0(t)$.  
Under good learning, according to \cref{eq:check-d-filter-defn}, the learned model $\hat d(x(t))$ approaches the true uncertainty $d(x(t))$; as a result, the filtered disturbance estimate $\check d(t)$ defined in \cref{eq:check-d-filter-defn} can be much closer to $d(x(t))$ leading to improved robustness and performance, compared to no and moderate learning cases.




\section{Conclusions}\label{sec:conclusion}
This paper presents a disturbance estimation-based contraction control architecture that allows for using model learning tools (e.g., a neural network) to learn uncertain dynamics while guaranteeing exponential trajectory convergence during learning transients under certain conditions. The architecture uses a disturbance estimator to estimate the value of the uncertainty, i.e., the difference between nominal dynamics and actual dynamics, with pre-computable estimation error bounds (EEBs), at each time instant. The learned dynamics, the estimated disturbances, and the EEBs are then incorporated into a robust Riemann energy condition,  which is used to compute the control signal that guarantees exponential convergence to the desired trajectory throughout the learning phase. On the other hand, we show that learning can facilitate better trajectory planning and improve the robustness of the closed-loop system, e.g., against input delays. The proposed framework is validated on a planar quadrotor example.

Future directions could involve addressing broader uncertainties, especially unmatched uncertainties prevalent in practical systems, minimizing the conservatism of the estimation error bound, and demonstrating the efficacy of the proposed control framework with alternative model learning tools.

\section*{Acknowledgments}
This research was funded in part by NASA through the ULI grant \#80NSSC22M0070, and in part by NSF under the RI grant \#2133656.

\bibliographystyle{unsrt}
\bibliography{bib/refs-pan,bib/refs-new,bib/refs-naira}

\appendix
\subsection{Ablation Study on the Role of Low-Pass Filter}\label{sec:role_filter}
We tested the performance of DE-CCM under different filter bandwidths in the presence of a 10 ms input delay. The estimation error bound was fixed to 0.1 for all test cases. The results are shown in Figure~\ref{fig:filters}. It is evident that the low-pass filter helps protect the robustness of the DE-CCM against input delay. Moreover, this protection decreases when the filter bandwidth increases. Note that the no-filter case can be considered equivalent to a filter with infinitely high bandwidth. We also repeated the above tests but without injecting input delay, and did not notice much difference among the different filter cases in terms of both tracking performance and control inputs. 
\begin{figure}[h]
    \centering
    \includegraphics[width=0.48\textwidth]{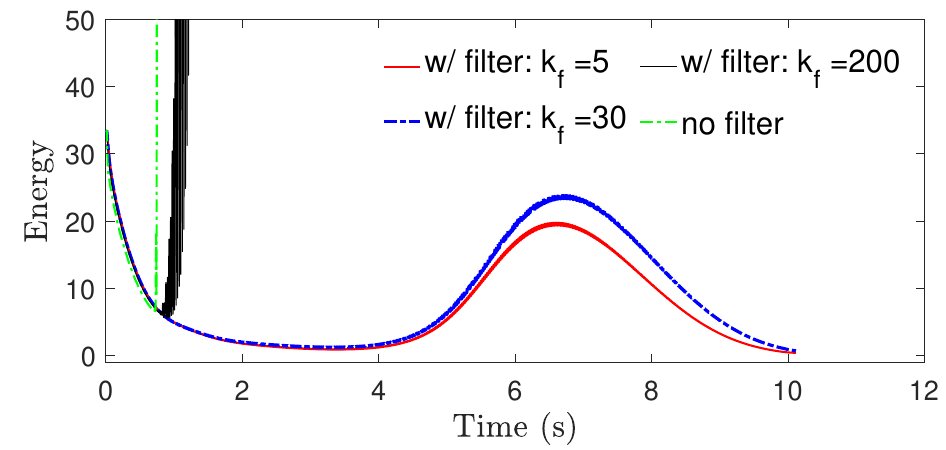}
    \includegraphics[width=0.48\textwidth]{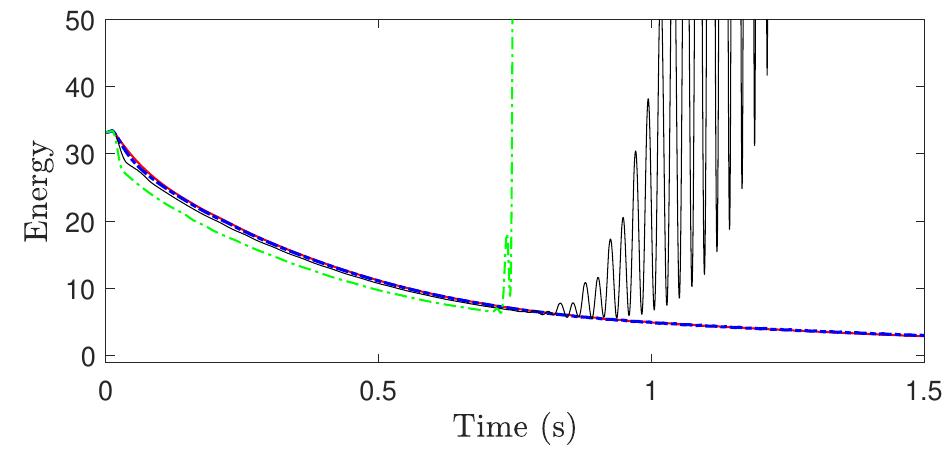}
     \includegraphics[width=0.48\textwidth]{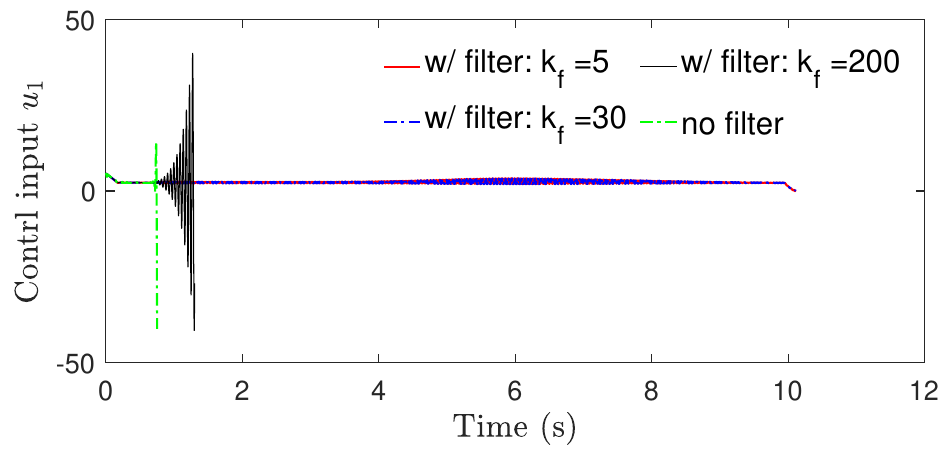}
     \includegraphics[width=0.48\textwidth]{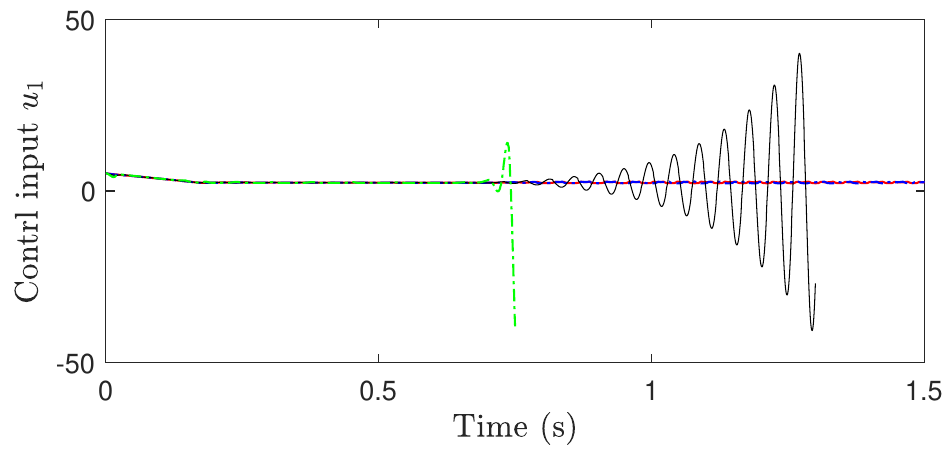}
    \vspace{-2mm}
    \caption{\blue{Riemann Energy (top) and control input (bottom) yielded by DE-CCM under different filters in the presence of 10 ms input delay. The plots on the right are zoomed-in versions of the corresponding plots on the left. $k_f$ denotes the bandwidth of a first-order low-pass filter for both control input channels. Simulations for $k_f=200$ and no-filter cases had to be pre-emptively stopped due to too much deviation of actual states from nominal states, which rendered computation of the geodesic (see Section~\cref{sec:sub-ccm-review}) infeasible.} 
    }
    \label{fig:filters}
\end{figure}
}

\end{document}